\newtheorem{theorem}{Theorem}[section]
\newtheorem{corollary}[theorem]{Corollary}
\newtheorem{lemma}[theorem]{Lemma}
\newtheorem{algorithm}[theorem]{Algorithm}
\newtheoremstyle{normalnoit}{}{}{}{ }{\bf }{.}{ }{}
\theoremstyle{normalnoit}
\newtheorem{example}[theorem]{Example}
\newtheorem{remark}[theorem]{Remark}
\newcommand{\be}{\begin{equation}}
\newcommand{\ee}{\end{equation}}
\newcommand{\bt}{\begin{theorem}}
\newcommand{\et}{\end{theorem}}
\def\invddots{\mathinner{\mskip1mu\raise1pt\vbox{\kern7pt\hbox{.}}\mskip2mu
        \raise4pt\hbox{.}\mskip2mu\raise7pt\hbox{.}\mskip1mu}}
\def\dddots{\mathinner{\mskip1mu\raise4pt\vbox{\kern7pt\hbox{.}}\mskip2mu
        \raise3pt\hbox{.}\mskip2mu\raise1pt\hbox{.}\mskip1mu}}
\begin{document}

%
%
\title{Signal Flow Graph Approach to Efficient DST I-IV Algorithms}
\author{Sirani M. Perera}

\maketitle

\keywords{Discrete Sine Transform; Fast and Efficient Algorithms; Recursive Algorithms; Arithmetic Cost; Sparse and Orthogonal Factors; Signal Flow Graphs}

\noindent {\bf AMS classification:} 15A23, 15B10, 65F50, 65T50, 65Y05, 65Y20, 94A12\\

\begin{abstract}
\noindent In this paper, fast and efficient discrete sine transformation (DST) algorithms are presented based on the factorization of sparse, scaled orthogonal, rotation, rotation-reflection, and butterfly matrices. These algorithms are completely recursive and solely based on DST I-IV. The presented algorithms have low arithmetic cost compared to the known fast DST algorithms. Furthermore, the language of signal flow graph representation of digital structures is used to describe these efficient and recursive DST algorithms having $(n-1)$ points signal flow graph for DST-I and $n$ points signal flow graphs for DST II-IV. 
\end{abstract}



\section{Introduction}
\label{intro}
Applications of Fast Fourier Transform (FFT) have spread to a very diverse field in applied mathematics and electrical engineering and even the origin of the FFT goes back to analysis of the rotation of Helium molecule \cite{CT65}. By now J. Dongarra and F. Sullivan have categorized FFT as one of the top 10 algorithms of the computer age which had the greatest influence on the development and practice of science and engineering in the 20$^{\rm th}$ century. FFT is used to compute Discrete Fourier Transform (DFT) and its inverse efficiently. On the other hand DFT implementation algorithms employ FFT so FFT and DFT are sometimes used interchangeably. Discrete Sine Transform (DST) is a Fourier-related transform similar to the DFT, but using a purely real matrix. Among applications of the DFT; sine and cosine waves of the DFT with different frequencies are used to classify the traffic monitoring sites into different seasonal patterns \cite{SLZL13}, DST has been identified as the method which generates better results for noise estimation as compared with Discrete Cosine Transform (DCT) and the DFT \cite{DJ11}, discrete fractional sine transform has identified as the method for generating fingerprint templates with high recognition accuracy \cite{Y11}, DCT, DEST, and DFT can be approximated to the Karhunen Loeve Transformation (KLT) and the connection of KLT to the color image compression \cite{BYR06, KGP01, RKH10, RY90}, DST can be used to analyze image reconstruction via signal transition through a square-optical fiber lenses \cite{WSHB00}, spectral interference and additive wideband noise on the accuracy of the normalized frequency estimator can be investigated through discrete-time sine-wave \cite{BDP12}, to mention a few. Together with the above, the engagement of DCT and DST in image processing, signal processing, finger print enhancement, quick response code (QR code), and multimode interface can also be seen in e.g., \cite{B13, CR12, FMW13, HSMR12, JKJ09, KSS14, KSN14, KS78, KR09, LKKP13, MPH12,  L10, LC09,  S86, S99, VL92, VZR12, VP09}.          
\newline\newline
The family of DFT consists of eight versions (I-VIII) of DCT and DST and these versions appear depending on odd or even type and also with respect to different Neumann and Dirichlet boundary conditions \cite{BYR06, M94, RS11, S99}. Though there are eight versions, depending on applications in transform coding and digital filtering of signals, we consider DCT and DST matrices as varying from I to IV types. Let us consider four orthogonal types of DST having superscripts to denote the type of DST and a subscript to denote the order of DST in the matrix form;    
\begin{equation}
\label{tbl:sct}
\begin{array}{cc}
DST-I: & S_{n-1}^I  =  \sqrt{\frac{2}{n}}\left [\:{\rm sin}\:\frac{(j+1)(k+1)\pi }{n}  \right ]_{j,\:k\: = 0}^{n-2}, 
\\
DST-II: & S_{n}^{II}  = \sqrt{\frac{2}{n}}\left [ \epsilon_n(j+1)\:{\rm sin}\:\frac{(j+1)(2k+1)\pi }{2n}  \right ]_{j,\:k\:=0}^{n-1} \\
DST-III: & S_{n}^{III}= \sqrt{\frac{2}{n}}\left [ \epsilon_n(k+1)\:{\rm sin}\:\frac{(2j+1)(k+1)\pi }{2n}  \right ]_{j,\:k\:=0}^{n-1},
\\
DST-IV: & S_{n}^{IV} = \sqrt{\frac{2}{n}}\left [ \:{\rm sin}\:\frac{(2j+1)(2k+1)\pi }{4n}  \right ]_{j,\:k\:=0}^{n-1}
\end{array}
\end{equation}
where $\epsilon_n(0)=\epsilon_n(n)=\frac{1}{\sqrt{2}}$, $\epsilon_n(j)=1$ for $j \in \{1,2,\cdots,n-1\}$ and $n \geq 2$ is an even integer. Among DST I-IV transformations, $S_{n-1}^I$ and $S_n^{IV}$ were introduced in \cite{J76, J79} and $S_{n}^{II}$ and its inverse $S_{n}^{III}$ were introduced in \cite{KS78} into digital signal processing. DST-II is a complementary or alternative transform to DCT-II which is used in transform coding. Like DFT and DCT, these DST matrices hold linearity, convolution-multiplication, and shift properties. 
\newline\newline
Among different mathematical techniques used to derive fast algorithms for discrete cosine and sine transformations, the polynomial arithmetic technique (see e.g. \cite{PM03,ST91}) and the matrix factorization technique (see e.g. \cite{ BYR06, PT05, RY80, SO13, S14, W84}) can be seen as the dominant techniques. Apart from these two main techniques some other authors (see e.g. \cite{KO96, OOW03}) have used different techniques like displacement approach and polynomial division in matrix form to derive factorizations for DCT and DST.  Efficient algorithms for DCT or DST of radix-2 length $n$ require about $2\:n \:log_2 n$ flops. Such a DCT or DST algorithm generates a factorization of these matrices having sparse and non-orthogonal matrices. Thus, if the factorization for DCT or DST does not preserve orthogonality the resulting DCT or DST algorithms lead to inferior numerical stability (see e.g. \cite{TZ00}). The matrix factorization for DST I in \cite{RY80} used the results in \cite{CSF77} to decompose DST I into DCT and DST. Also the decomposition for DCT II in \cite{W84} is a slightly different version of the result in \cite{CSF77}. Though one can find orthogonal matrix factorization for DCT and DST in \cite{W84}, the resulting algorithms in \cite{W84} are not completely recursive and hence do not lead to simple recursive algorithms. An alternative factorization for DCT I-IV in \cite{S14} and DST I-IV in \cite{SO13} can be seen in \cite{BYR06, PT05} but the factorizations in the latter papers are not solely dependent on DCT I-IV or DST I-IV. Moreover \cite{BYR06} has used the same factorization for DST-II and DST-IV as in \cite{W84}. However one can use these \cite{BYR06, W84} results to derive recursive, stable and radix-2 algorithms as stated in \cite{PT05, SO13, S14}. 
\newline\newline
In electrical engineering, control theory, system engineering, theoretical computer science, etc. signal flow graphs represent realizations of systems as electronic devices. The objective here is to build a device to implement or realize algebraic operations used in sparse and orthogonal factorization of fast and recursive DST I-IV algorithms. Based on the factorization of DFT, DCT, and DST matrices one can design signal flow graphs such as: 8-point signal flow graphs on various fast DCT and DST algorithms having sparse and/or orthogonal factorization in \cite{BYR06}, signal flow graphs for forward and backward modified DCT implementations with $n=12$ and also with mix-radix decomposition of $n=12$ in \cite{B11}, fast DST-VII and DCT-II algorithms based signal flow graphs for $2n+1$ points and $n(2n+1)$ points DCT-II in \cite{R13}, signal flow graphs representation of the direct 2-D DCT-II and 2-D DST-II computation and their inverses for $16\times 16, 8 \times 16, 4 \times 16, 16 \times  8$, and $16 \times  4$ block sizes in \cite{BR00}, signal flow graphs of the coordinate rotation digital computer-based $n$ points DCT-II, DCT-III and DST-II, DST-III in \cite{HXL14}, signal flow graphs based Jacob rotation for $n/2$ points DCT-IV and modified DCT-IV in \cite{B09}, and signal flow graphs based on hybrid jacket-Hadamard matrix for $n$ points DCT-II, DST-II, and DFT-II  in \cite{LKKP13}. However, there is no paper on signal flow graphs based on fast and completely recursive DST I-IV algorithms having sparse, scaled orthogonal factorization, rotation, rotation-reflection matrix factorizations and especially the generalization of $n$ points signal flow graphs covering all DST matrices of types I to IV. Hence in this paper we modify the sparse and orthogonal factorizations of stable DST I-IV  algorithms proposed in \cite{SO131, SO13} to derive fast (compared to known algorithms), efficient, and completely recursive sole algorithms based on DST I-IV having sparse, scaled orthogonal, rotation, rotation-reflections matrices and to discuss the arithmetic complexity of these fast DST I-IV algorithms. Furthermore, the paper presents generalized $n-1$ points signal flow graph for DST-I and $n$ points signal flow graphs for DST II-IV based on the recursive DST I-IV algorithms.     
\newline\newline
In section 2 we modify the factorizations derived in \cite{SO13} to express fast, efficient, and completely recursive sole algorithms for DST I-IV having scaled orthogonal, sparse, rotation, rotation-reflection, and butterfly matrices. Next, in section 3, we derive a number of additions and multiplications required to compute these fast and efficient DST I-IV algorithms and illustrate the numerical results based on that. In section 4, we develop and then generalize signal flow graphs for $n-1$ points DST-I algorithm and $n$ points DST II-IV algorithms.   
\section{Efficient and recursive DST algorithms having sparse, scaled orthogonal, and rotational-reflection factors}
\label{sec:factor}
This section presents fast, efficient, and completely recursive DST algorithms solely defined via DST I-IV having sparse, scaled orthogonal, rotational, and rotational-reflection factors by modifying radix-2, recursive, and stable DST I-IV algorithms having sparse and orthogonal factors introduced in \cite{SO13}. The purpose of this is to significantly reduce the number of multiplications required to compute DST I-IV algorithms compared to the known fast, efficient, and stable DST algorithms having sparse factorizations.  
\\\\
By applying the permutation matrix to each sine transform matrix and using the trigonometric addition, complementary, and supplementary identities, one can derive the orthogonal matrix factorization for DST I-IV as in \cite{SO13}.  
\newline\newline
In the following we state the collection of sparse and orthogonal matrices which are frequently used in this paper. For a given vector ${\bf x} \in \mathbb{R}^n$, let us introduce an involution matrix $\tilde{I}_n$ by 
\begin{equation}
\tilde{I}_n\:\textbf{x}= \begin{array}{c}
\left[x_{n-1},x_{n-2},\cdots,x_0 \right ]^T 
\end{array},
\nonumber
\end{equation}
a diagonal matrix $D_n$ by
\begin{equation}
 D_n \: \textbf{x}  =\left \{ \begin{array}{cc}
\left[x_0, -x_1, x_2, -x_3,\cdots, x_{n-1}, -x_{n-1} \right ]^T & \textrm{even\:}n, \\
\left[x_0, -x_1, x_2, -x_3,\cdots, -x_{n-1}, x_{n-1} \right ]^T & \textrm{odd\:}n
\end{array} 
\right.
\nonumber
\end{equation}
and, for $n \geq 3$ an even-odd permutation matrix $P_n$ by
\begin{equation}
P_n  \: \textbf{x} =\left \{ \begin{array}{cc}
\left[x_0,x_2,\cdots,x_{n-2},x_1,x_3,\cdots,x_{n-1} \right ]^T & \textrm{even\:}n, \\
\left[x_0,x_2,\cdots,x_{n-1},x_1,x_3,\cdots,x_{n-2} \right ]^T & \textrm{odd\:}n.
\end{array} \right.
\nonumber
\end{equation}
For even integer $n \geq 4$, we introduce sparse and orthogonal matrices: 
\[
\widehat{H}_{n-1}=\frac{1}{\sqrt{2}}\left[\begin{array}{rcr}
 I_{\frac{n}{2}-1}&  & \widetilde{I}_{\frac{n}{2}-1}\\ 
 & \sqrt{2} & 
\\
I_{\frac{n}{2}-1} &  &-\widetilde{I}_{\frac{n}{2}-1} \\ 
\end{array}\right],
\hspace{.2in}
{H}_{n}=\frac{1}{\sqrt{2}}\left[\begin{array}{rcr}
 I_{\frac{n}{2}}&  & \widetilde{I}_{\frac{n}{2}}\\ 
\\
 I_{\frac{n}{2}} &  &-\widetilde{I}_{\frac{n}{2}} \\ 
\end{array}\right],
\]
\[
V_n=\left[\begin{array}{ccc}
1 &  & \\ 
 & \frac{1}{\sqrt{2}}\left[\begin{array}{rr}
I_{\frac{n}{2}-1} & -I_{\frac{n}{2}-1}\\ 
 -I_{\frac{n}{2}-1}& -I_{\frac{n}{2}-1}
\end{array}\right] & \\ 
 &  & -1
\end{array}\right]
\left[\begin{array}{cc}
 \widetilde{I}_{\frac{n}{2}}& \\ 
 & D_{\frac{n}{2}}
\end{array}\right],
\]
and also a rotational-reflection matrix:
\[
\begin{aligned}
{Q}_n & =\begin{bmatrix}
D_{\frac{n}{2}} & \\ 
 & I_{\frac{n}{2}}
\end{bmatrix}\begin{bmatrix}
{\rm diag}\:S_{\frac{n}{2}} &\left ( {\rm diag}\:C_{\frac{n}{2}} \right ) \widetilde{I}_{\frac{n}{2}}\\ 
-\widetilde{I}_{\frac{n}{2}}\left ( {\rm diag}\:C_{\frac{n}{2}} \right ) &  {\rm diag}\:\left ( \widetilde{I}_{\frac{n}{2}}S_{\frac{n}{2}}  \right )\end{bmatrix}
\\ 
& = \small \begin{bmatrix}
{\rm sin}\:\frac{\pi}{4n} &   & & & &  &  & {\rm cos}\:\frac{\pi}{4n}\\ 
 & -{\rm sin}\:\frac{3\pi}{4n} &  &  & &  & -{\rm cos}\:\frac{3\pi}{4n} & \\ 
&    & \ddots& & &\invddots &    & \\
 &  &  &  & -{\rm sin}\:\frac{(n-1)\pi}{4n}&-{\rm cos}\:\frac{(n-1)\pi}{4n} & &  &  & \\ 
 &  &  & &-{\rm cos}\:\frac{(n-1)\pi}{4n} &{\rm sin}\:\frac{(n-1)\pi}{4n} & &  &  & \\ 
&    &  \invddots  & & &\ddots &    & \\
  & -{\rm cos}\:\frac{3\pi}{4n}&  &  & &  & {\rm sin}\:\frac{3\pi}{4n} & \\ 
-{\rm cos}\:\frac{\pi}{4n} &  &  &   & &  &  & {\rm sin}\:\frac{\pi}{4n}
\end{bmatrix},
\end{aligned}
\]
where 
\[
C_{\frac{n}{2}}=\left [ {\rm cos} \frac{(2k+1)\pi}{4n} \right ]_{k=0}^{\frac{n}{2}-1}\hspace{.2in}{\rm and}\hspace{.2in}
S_{\frac{n}{2}}=\left [ {\rm sin} \frac{(2k+1)\pi}{4n} \right ]_{k=0}^{\frac{n}{2}-1}.
\]
\subsection{Stable, recursive, radix-2 DST I-IV algorithms having sparse and orthogonal factors}
\label{subs:Salgo}
Before developing DST matrix factorization based on fast, efficient, and completely recursive DST I-IV algorithms, let us state stable, simple, recursive, radix-2 DST I-IV algorithms having sparse and orthogonal factors derived in \cite{SO13}. 
\\\\
The algorithms for DST I-IV are stated in order $S_n^{II}$, $S_n^{IV}$, $S_n^{III}$ and $S_{n-1}^{I}$ respectively. 
\\\\
In \cite{SO13}, orthogonal factorizations for types II and IV of discrete sine transform matrices are given by
\begin{center}
$
S_n^{II}=P_n^T
\begin{bmatrix}
S_{\frac{n}{2}}^{IV} & 0 \\
0& S_{\frac{n}{2}}^{II}
\end{bmatrix} H_n
\hspace{.7in}
{\rm and} 
\hspace{.7in}
S_n^{IV}=P_n^TV_n
\begin{bmatrix}
S_{\frac{n}{2}}^{II} & 0 \\
0& S_{\frac{n}{2}}^{II}
\end{bmatrix} Q_n
$
\end{center}
Thus, the recursive algorithms for DST-II and DST-IV can be stated via algorithms $\bf{(\ref{algo:s2})}$ and $\bf{(\ref{algo:s4})}$ respectively.  
\begin{algorithm} ${\bf sin2}( n)$ \\\\
\label{algo:s2}
Input: $n=2^t(t \geq 1)$, $n_1=\frac{n}{2}$. 
\begin{enumerate}
\item{If $n=2$, then} \\
$
{ S2}:=\frac{1}{\sqrt{2}}\left[\begin{array}{rr}
1 & 1\\
1 & -1
\end{array}\right].
$
\item{ If $n \geq 4$, then}\\
$
\begin{array}{c}
\begin{aligned}
\hspace{.1in} { M1}  :=&\: {\bf sin4} \left( n_1 \right),   \\
\hspace{.1in} { M2}  :=&\: {\bf sin2} \left(n_1 \right),  \\
\hspace{.1in} { S2} := &\: P_n^T\left({\rm blkdiag}(M1, M2) \right) H_n.   
\end{aligned}
\end{array}
$
\end{enumerate}
Output: ${ S2}=S_n^{II}$.
\end{algorithm}
\begin{algorithm} ${\bf sin4}(n)$ \\\\
\label{algo:s4}
Input: $n=2^t(t \geq 1)$, $n_1=\frac{n}{2}$. 
\begin{enumerate}
\item{If $n=2$, then} \\
$
{S4}:=\left[\begin{array}{rr}
\sin \frac{\pi}{8} & \cos \frac{\pi}{8}\\
\cos \frac{\pi}{8} & -\sin \frac{\pi}{8}
\end{array}\right].
$
\item{ If $n \geq 4$, then}\\
$
\begin{array}{c}
\begin{aligned}
\hspace{.1in} { M1} :=&\: {\bf sin2} \left( n_1 \right),   \\
\hspace{.1in} { M2} :=&\: {\bf sin2} \left( n_1 \right), \\ 
\hspace{.1in} { L} :=&\: V_n  \left({\rm blkdiag} (M1, M2) \right)Q_n, \\ 
\hspace{.1in} { S4} :=&\:  P_n^T { L}.   
\end{aligned}
\end{array}
$
\end{enumerate}
Output: ${ S4}=S_n^{IV}$.
\end{algorithm}
The transpose of DST-II is DST-III. Thus DST-III algorithm can be computed via algorithm $\bf{(\ref{algo:s2})}$. Observe that this algorithm executes recursively with DST-II and DST-IV algorithms.  
\begin{algorithm} ${\bf sin3}( n)$ \\\\
\label{algo:s3}
Input: $n=2^t(t \geq 1)$, $n_1=\frac{n}{2}$. 
\begin{enumerate}
\item{If $n=2$, then} \\
$
{S3}:=\frac{1}{\sqrt{2}}\left[\begin{array}{rr}
1 & 1\\
1 & -1
\end{array}\right].
$
\item{ If $n \geq 4$, then}\\
$
\begin{array}{c}
\begin{aligned}
\hspace{.1in} {M1}  :=&\: {\bf sin4} \left(n_1 \right),   \\
\hspace{.1in} {M2}  :=&\: {\bf sin3} \left( n_1 \right),  \\
\hspace{.1in} { S3} :=&\:  H_n^T \left({\rm blkdiag} (M1, M2)\right) P_n. \\
\end{aligned}
\end{array}
$
\end{enumerate}
Output: ${ S3}=S_n^{III}$.
\end{algorithm}
\noindent
\\\\
Following \cite{SO13}, orthogonal factorizations for type I discrete sine transform matrix is given by
\begin{center}
$
S_{n-1}^{I}=P_{n-1}^T
\begin{bmatrix}
C_{\frac{n}{2}}^{III} & 0 \\
0& C_{\frac{n}{2}-1}^{I}
\end{bmatrix} \widehat{H}_{n-1}
$
\end{center}
Thus, the recursive algorithm for DST-I can be stated via algorithm $\bf{(\ref{algo:s1})}$. Note that this algorithm runs recursively with DST II-IV algorithms.   
\begin{algorithm} ${\bf sin1}( n-1)$ \\\\
\label{algo:s1}
Input: $n=2^t(t \geq 1)$, $n_1=\frac{n}{2}$. 
\begin{enumerate}
\item{If $n=2$, then} \\
$
{ S1}:= 1.
$
\item{ If $n \geq 4$, then} \\
$
\begin{array}{c}
\begin{aligned}
\hspace{.1in} { M1}  :=&\: {\bf sin3} \left(n_1 \right),   \\
\hspace{.1in} { M2}  :=&\: {\bf sin1} \left( n_1-1 \right),  \\
\hspace{.1in} { S1} :=&\:  P_{n-1}^T \left({\rm blkdiag}(M1, M2) \right) \widehat{H}_{n-1}. \\
\end{aligned}
\end{array}
$
\end{enumerate}
Output: ${ S1}=S_{n}^{I}$.
\end{algorithm}
\subsection{Efficient and completely recursive DST I-IV algorithms}
\label{subs:FCalgo}
In this section, we present fast, efficient, and completely recursive DST I-IV (say NDST I-IV) algorithms using DST I-IV algorithms stated via \ref{algo:s2}, \ref{algo:s4}, \ref{algo:s3}, and \ref{algo:s1} i.e. we introduce DST I-IV algorithms having sparse, scaled orthogonal, rotational, rotational-reflection factors so that DST I-IV are orthogonal w. r. t. the scale factor $\frac{1}{\sqrt{n}}$. In order to reduce number of multiplications, we move the factor $\frac{1}{\sqrt{2}}$ in ${H}_{n}, \widehat{H}_{n-1},$ and $V_n$ without changing the rotation-reflection matrix $Q_n$ so that we compute $\sqrt{n}\:S_n^{II}, \sqrt{n}\:S_n^{IV}, \sqrt{n}\:S_n^{III}$, and $\sqrt{n}\:S_{n-1}^{I}$ respectively. Let us state the corresponding new algorithms via ${\bf nsin2}(n)$, ${\bf nsin4}(n)$, ${\bf nsin3}(n)$, and ${\bf nsin1}(n-1)$ respectively.  
\begin{algorithm} ${\bf nsin2}( n)$ \\\\
\label{algo:ms2}
Input: $n=2^t(t \geq 1)$, $n_1=\frac{n}{2}$. 
\begin{enumerate}
\item{If $n=2$, then} \\
$
{ MS2}:=\left[\begin{array}{rr}
1 & 1\\
1 & -1
\end{array}\right].
$
\item{ If $n \geq 4$, then}\\
$
\begin{array}{c}
\begin{aligned}
\hspace{.1in} { M1}  :=&\: {\bf nsin4} \left( n_1 \right),   \\
\hspace{.1in} { M2}  :=&\: {\bf nsin2} \left(n_1 \right),  \\
\hspace{.1in} { MS2} := &\: P_n^T\left({\rm blkdiag}(M1, M2) \right) \left( \sqrt{2}\: H_n \right).   
\end{aligned}
\end{array}
$
\end{enumerate}
Output: ${ MS2}=\sqrt{n}\:S_n^{II}$.
\end{algorithm}
\begin{algorithm} ${\bf nsin4}(n)$ \\\\
\label{algo:ms4}
Input: $n=2^t(t \geq 1)$, $n_1=\frac{n}{2}$. 
\begin{enumerate}
\item{If $n=2$, then} \\
$
{MS4}:=\sqrt{2}\:\left[\begin{array}{rr}
\sin \frac{\pi}{8} & \cos \frac{\pi}{8}\\
\cos \frac{\pi}{8} & -\sin \frac{\pi}{8}
\end{array}\right].
$
\item{ If $n \geq 4$, then}\\
$
\begin{array}{c}
\begin{aligned}
\hspace{.1in} { M1} :=&\: {\bf nsin2} \left( n_1 \right),   \\
\hspace{.1in} { M2} :=&\: {\bf nsin2} \left( n_1 \right), \\ 
\hspace{.1in} { L} :=&\: \left( \sqrt{2} V_n \right)  \left({\rm blkdiag} (M1, M2) \right)Q_n, \\ 
\hspace{.1in} { MS4} :=&\:  P_n^T { L}.   
\end{aligned}
\end{array}
$
\end{enumerate}
Output: ${ MS4}=\sqrt{n}\:S_n^{IV}$.
\end{algorithm}
The fast, efficient, and completely recursive DST-III algorithm can be computed using the DST-II so that it runs recursively with ${\bf nsin2}(n)$ and ${\bf nsin4}(n)$ algorithms.
\begin{algorithm} ${\bf nsin3}( n)$ \\\\
\label{algo:ms3}
Input: $n=2^t(t \geq 1)$, $n_1=\frac{n}{2}$. 
\begin{enumerate}
\item{If $n=2$, then} \\
$
{MS3}:=\left[\begin{array}{rr}
1 & 1\\
1 & -1
\end{array}\right].
$
\item{ If $n \geq 4$, then}\\
$
\begin{array}{c}
\begin{aligned}
\hspace{.1in} {M1}  :=&\: {\bf nsin4} \left(n_1 \right),   \\
\hspace{.1in} {M2}  :=&\: {\bf nsin3} \left( n_1 \right),  \\
\hspace{.1in} { MS3} :=&\:  \left( \sqrt{2} \:H_n^T \right) \left({\rm blkdiag} (M1, M2)\right) P_n. \\
\end{aligned}
\end{array}
$
\end{enumerate}
Output: ${ MS3}=\sqrt{n}\:S_n^{III}$.
\end{algorithm}
Finally, the fast, efficient, and completely recursive DST I algorithm can be stated as follows. Note that this algorithm runs recursively with ${\bf nsin2}(n)$ , ${\bf nsin4}(n)$, and ${\bf nsin3}(n)$ algorithms.
\begin{algorithm} ${\bf nsin1}( n-1)$ \\\\
\label{algo:ms1}
Input: $n=2^t(t \geq 1)$, $n_1=\frac{n}{2}$. 
\begin{enumerate}
\item{If $n=2$, then} \\
$
{ MS1}:= 1.
$
\item{ If $n \geq 4$, then} \\
$
\begin{array}{c}
\begin{aligned}
\hspace{.1in} { M1}  :=&\: {\bf nsin3} \left(n_1 \right),   \\
\hspace{.1in} { M2}  :=&\: {\bf nsin1} \left( n_1-1 \right),  \\
\hspace{.1in} { MS1} :=&\:  P_{n-1}^T \left({\rm blkdiag}(M1, M2) \right) \left( \sqrt{2} \widehat{H}_{n-1} \right). \\
\end{aligned}
\end{array}
$
\end{enumerate}
Output: ${ MS1}=\sqrt{n}\:S_{n-1}^{I}$.
\end{algorithm}

\subsection{Examples for computing efficient and completely recursive DST I-IV algorithms}
\label{ex:CnS}
Here we state examples for computing fast, efficient, and recursive DST I-IV algorithms having sparse, scaled orthogonal, rotational, and rotational-reflection matrix factorizations based on DST I-IV algorithms ${\bf nsin1}(n-1)$, ${\bf nsin2}(n)$, ${\bf nsin3}( n)$, and ${\bf nsin4}( n)$ for $n=8$. Later in section \ref{sec:SFG}, we use the factorizations for DST I-IV matrices to develop and generalize $n$ points signal flow graphs for DST I-IV algorithms.


\begin{example} \label{Es32} By following algorithms $\bf{(\ref{algo:ms1})}$, $\bf{(\ref{algo:ms2})}$, $\bf{(\ref{algo:ms3})}$, and $\bf{(\ref{algo:ms4})}$, the factorization for DST-I given by:

\begin{equation}
\label{exs19}
\begin{aligned}
&\sqrt{8}\: S_7^{I}\\
&={P}_7^T\begin{bmatrix}
\sqrt{2}\:{H}_{4}^T & 0\\ 
0 & {P}_3^T
\end{bmatrix}\begin{bmatrix}
\sqrt{2}\:S_2^{IV} &0  & 0 & 0\\ 
 0& \sqrt{2}\:S_2^{III} &  0&0 \\ 
 0& 0 & \sqrt{2}\:S_2^{III} & 0\\ 
0 & 0 & 0 & \sqrt{2}\:S_1^{I}
\end{bmatrix}\begin{bmatrix}
{P}_4 & 0\\ 
0 & \sqrt{2}\:\widehat{H}_{3}
\end{bmatrix}\sqrt{2}\:\widehat{H}_7
\end{aligned}
\end{equation}
where
\[
{P}_3=\begin{bmatrix}
1 & 0 & 0 \\ 
0 & 0 &1  \\ 
0 & 1 & 0 \\ 
\end{bmatrix},\:S_1^I=1,\:\sqrt{2}\:S_2^{III}=\left[\begin{array}{rr}
1 &1 \\ 
1 &-1 
\end{array}\right],\:\sqrt{2}\:S_2^{IV}=\sqrt{2}\:\left[\begin{array}{rr}
{\rm sin}\frac{\pi}{8} &{\rm cos}\frac{\pi}{8} \\ 
{\rm cos}\frac{\pi}{8} &-{\rm sin}\frac{\pi}{8}
\end{array}\right]
\]
\[
\sqrt{2}\:\widehat{H}_3=\left[\begin{array}{rrr}
1 &0  & 1\\ 
0& \sqrt{2} & 0
\\
 1&  0& -1
\end{array}\right]
,\:
\sqrt{2}\:\widehat{H}_7=\left[\begin{array}{rrr}
I_2 &  0& \tilde{I}_2 \\ 
0  & \sqrt{2} & 0  \\ 
 I_2 & 0  & -\tilde{I}_2 
\end{array}\right]
\]
\end{example}

\begin{example} \label{Es34}
By following algorithms $\bf{(\ref{algo:ms2})}$ and $\bf{(\ref{algo:ms4})}$, the factorization for DST-II given by:
\begin{equation}
\label{exs28}
\begin{aligned}
& \sqrt{8}\:S_8^{II}\\
&={P}_8^T\begin{bmatrix}
{P}_{4}^T & 0\\ 
0 & {P}_4^T
\end{bmatrix}\:\begin{bmatrix}
\sqrt{2}\:V_{4} & 0\\ 
0 & I_4
\end{bmatrix}\begin{bmatrix}
\sqrt{2}\:S_2^{II} &0  &0  & 0\\ 
0 & \sqrt{2}\:S_2^{II} & 0 &0 \\ 
 0& 0 & \sqrt{2}\:S_2^{IV} & 0\\ 
0 & 0 & 0 & \sqrt{2}\:S_2^{II}
\end{bmatrix}\\
& \hspace{.3in} \begin{bmatrix}
{Q}_{4} & 0\\ 
0 & \sqrt{2}\:H_4
\end{bmatrix}\sqrt{2}\:{H}_8
\end{aligned}
\end{equation}
where 
\[
\sqrt{2}\:V_4=\left[\begin{array}{rrrr}
0 &\sqrt{2} &0 & 0\\ 
1 & 0 & -1& 0\\
-1 & 0& -1& 0\\
 0& 0&0 &\sqrt{2}\\ 
\end{array}\right],\:Q_4=\begin{bmatrix}
{\rm sin}\frac{\pi}{16} & 0&0 &{\rm cos}\frac{\pi}{16}\\ 
0 &-{\rm sin}\frac{3\pi}{16} & -{\rm cos}\frac{3\pi}{16}&0\\
0 &-{\rm cos}\frac{3\pi}{16} & {\rm sin}\frac{3\pi}{16}&0\\
-{\rm cos}\frac{\pi}{16} &0 & 0&{\rm sin}\frac{\pi}{16}\\ 
\end{bmatrix}\]
\end{example}

\begin{example} \label{Es36}
By following algorithms $\bf{(\ref{algo:ms2})}$, $\bf{(\ref{algo:ms4})}$ and $\bf{(\ref{algo:ms3})}$, the factorization for DST-III given by:
\begin{equation}
\label{exs38}
\begin{aligned}
&\sqrt{8}\:S_8^{III}\\
&=\sqrt{2}\:{H}_8^T\begin{bmatrix}
{P}_4^T & 0\\ 
0 & I_4
\end{bmatrix}\:\begin{bmatrix}
\sqrt{2}\:V_{4} & 0\\ 
0 & \sqrt{2}\:H_4^T
\end{bmatrix}\begin{bmatrix}
\sqrt{2}\:S_2^{II} &  0&0  & 0\\ 
 0& \sqrt{2}\:S_2^{II} & 0 &0 \\ 
 0&0  & \sqrt{2}\:S_2^{IV} & 0\\ 
0 &  0&0  & \sqrt{2}\:S_2^{III}
\end{bmatrix}\\
& \hspace{.3in} \begin{bmatrix}
{Q}_{4} & 0\\ 
0 & P_4
\end{bmatrix}{P}_8
\end{aligned}
\end{equation}
\end{example}

\begin{example} \label{Es38}
By following algorithms $\bf{(\ref{algo:ms2})}$ and $\bf{(\ref{algo:ms4})}$, the factorization for DST-IV given by:
\begin{equation}
\begin{aligned}
& \sqrt{8}\:S_8^{IV}\\
&=P_8^T\:\sqrt{2}V_8\begin{bmatrix}
{P}_4^T & 0\\ 
0 & {P}_4^T
\end{bmatrix}\begin{bmatrix}
\sqrt{2}\:S_2^{IV} & 0 &0  & 0\\ 
0 & \sqrt{2}\:S_2^{II} &  0&0 \\ 
 0& 0 & \sqrt{2}\:S_2^{IV} & 0\\ 
0 & 0 &  0& \sqrt{2}\:S_2^{II}
\end{bmatrix}\\
& \hspace{.3in}
\begin{bmatrix}
\sqrt{2}\:{H}_{4} & 0\\ 
0 & \sqrt{2}\:{H}_{4}
\end{bmatrix}Q_8
\end{aligned}
\label{exs48}
\end{equation}
where 
\[
\sqrt{2}V_8=\left[\begin{array}{rrrrrrrr}
0 &0  & 0 & \sqrt{2} & 0 &0  & 0 & 0\\ 
0 & 0 &1  & 0 & -1 & 0 & 0 & 0\\ 
 0 & 1 & 0 & 0 &0  &  1&  0& 0\\ 
1 & 0 & 0 &0  &0  &0  & -1 & 0\\ 
 0& 0 & -1 & 0 & -1 &0  & 0 & 0\\ 
 0& -1 & 0 & 0 & 0 &  1& 0 & 0\\ 
 -1 & 0 & 0 & 0 &0  & 0&  -1& 0\\ 
 0 & 0 & 0 & 0 & 0 & 0 & 0 & \sqrt{2}
\end{array}\right]
\]
\[
Q_8=\arraycolsep=.9pt\begin{bmatrix}
{\rm sin}\:\frac{\pi}{32} &  0&  0& 0 & 0 & 0 &0  & {\rm cos}\:\frac{\pi}{32}\\ 
0 & -{\rm sin}\:\frac{3\pi}{32} & 0 & 0 & 0 &  0& -{\rm cos}\:\frac{3\pi}{32} & \\ 
0 &0  &  {\rm sin}\:\frac{5\pi}{32}&  0& 0 &{\rm cos}\:\frac{5\pi}{32}  &  0& 0\\ 
0 & 0 &0  &  -{\rm sin}\:\frac{7\pi}{32}& -{\rm cos}\:\frac{7\pi}{32} & 0 &  0& 0\\ 
 0 &0  & 0 & -{\rm cos}\:\frac{7\pi}{32} & {\rm sin}\:\frac{7\pi}{32} & 0 & 0 & 0 \\ 
 0 & 0 & -{\rm cos}\:\frac{5\pi}{32} & 0 &  0&{\rm sin}\:\frac{5\pi}{32}  &  0 & 0 \\ 
 0 & -{\rm cos}\:\frac{3\pi}{32}& 0 & 0 &0  & 0 & {\rm sin}\:\frac{3\pi}{32} & 0\\ 
-{\rm cos}\:\frac{\pi}{32} & 0 & 0 & 0 & 0 & 0 &  0& {\rm sin}\:\frac{\pi}{32}
\end{bmatrix}
\]
\end{example}

\section{Arithmetic complexity of computing fast, efficient, and completely recursive DST I-IV algorithms having sparse and scaled orthogonal factors}
\label{sec:cost}
The number of additions and multiplications required to compute DST I-IV algorithms via ${\bf nsin1}(n-1)$, ${\bf nsin2}(n)$, ${\bf nsin3}(n)$, ${\bf nsin4}( n)$ are considered in this section. The number of additions and multiplications required to compute, say length $n$,  DST-II algorithm (${\bf nsin2}(n)$) are denoted by $\#a(\textrm{NDST-II}, n)$ and $\#m(\textrm{NDST-II}, n)$ respectively. Note that the multiplication of $\pm 1$ and permutations are not counted. At the end of the section we illustrate numerical results based on the number of additions and multiplication required to compute these DST I-IV algorithms.
\subsection{Arithmetic complexity of DST I-IV algorithms}
\label{subs:Scost}
Here we address the arithmetic cost of computing fast, efficient, and recursive DST I-IV algorithms having sparse, scaled orthogonal, rotational, and rotational-reflection factors. The complexity of computing these DST I-IV algorithms are expressed first by calculating the arithmetic complexity of DST-II algorithm and then using it to compute the complexity of DST-IV, DST-III and DST-I algorithms respectively. 
\begin{lemma}\label{Ls41}
Let $n=2^t\:(t \geq 2)$ be given. If DST-II algorithm (${\bf nsin2}(n)$) is computed using algorithms $\bf{(\ref{algo:ms2})}$ and $\bf{(\ref{algo:ms4})}$ then the arithmetic cost of computing length $n$ DST-II algorithm is given by  
\begin{eqnarray}
\#a(\textrm{NDST-II}, n) &=& \frac{4}{3}nt-\frac{8}{9}n-\frac{1}{9}(-1)^t+1,
\nonumber \\
\#m(\textrm{NDST-II}, n) &=& \frac{2}{3}nt+\frac{2}{9}n+\frac{7}{9}(-1)^t-1.
\label{cs2s4}
\end{eqnarray}
\end{lemma}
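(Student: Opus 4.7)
The plan is to set up coupled recurrences for the arithmetic cost of the mutually recursive algorithms ${\bf nsin2}(n)$ and ${\bf nsin4}(n)$, combine them into single second-order linear recurrences for the DST-II counts, and solve these by the characteristic-equation method.

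First, I would tally the cost of each sparse factor appearing in the recursive step of algorithms $\textbf{(\ref{algo:ms2})}$ and $\textbf{(\ref{algo:ms4})}$. The scaled butterfly $\sqrt{2}\,H_n$ has only $\pm 1$ entries and so contributes $n$ additions and no multiplications. The rotation--reflection $Q_n$ decomposes into $n/2$ disjoint $2\times 2$ Givens-type blocks acting on the pairs $(x_k,x_{n-1-k})$ and therefore contributes $2n$ multiplications and $n$ additions. The scaled factor $\sqrt{2}\,V_n$ contributes multiplications for its $\pm\sqrt{2}$ corner entries and additions for its $\pm 1$ middle block, while the permutation $P_n^T$ and the permutation/sign factor ${\rm blkdiag}(\widetilde{I}_{n/2}, D_{n/2})$ inside $V_n$ are free. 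Combined with the block-diagonal substructure of the two algorithms, this yields, for $n\geq 4$, the coupled recurrences
\[
\#a(\textrm{NDST-II}, n) = \#a(\textrm{NDST-IV}, n/2) + \#a(\textrm{NDST-II}, n/2) + n,
\]
\[
\#a(\textrm{NDST-IV}, n) = 2\,\#a(\textrm{NDST-II}, n/2) + \gamma_a(n),
\]
and the analogous pair for $\#m$, with initial values read off from the $n=2$ branches of the two algorithms.

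Second, I would back-substitute to eliminate the DST-IV quantities and write $b_t := \#a(\textrm{NDST-II}, 2^t)$, $c_t := \#m(\textrm{NDST-II}, 2^t)$. This produces two second-order linear recurrences of the common form
\[
u_t = u_{t-1} + 2\,u_{t-2} + \alpha\,2^t + \beta,
\]
each valid once $t$ is large enough that both arguments on the right-hand side remain in the recursive regime; for the boundary case $t=2$ one reverts to the one-step coupled form to extract an additional initial value.

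Third, I would solve each such recurrence. The characteristic polynomial $x^2 - x - 2 = (x-2)(x+1)$ has roots $2$ and $-1$, giving homogeneous solution $A\cdot 2^t + B(-1)^t$. Because $2^t$ is itself a homogeneous solution, the forcing $\alpha\,2^t$ requires the resonant particular ansatz $Ct\cdot 2^t$, while the constant forcing $\beta$ is absorbed by an additive constant $D$. Substituting and matching the coefficients of $2^t$ and of the constants pins down $C$ and $D$; the two initial values then determine $A$ and $B$. The resulting closed forms are exactly $\frac{4}{3}nt-\frac{8}{9}n-\frac{1}{9}(-1)^t+1$ for $\#a$ and $\frac{2}{3}nt+\frac{2}{9}n+\frac{7}{9}(-1)^t-1$ for $\#m$, as claimed.

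The principal obstacle is the bookkeeping in the first step: one must carefully attribute each floating-point operation to exactly one sparse factor, in particular ensuring that the absorbed $\sqrt{2}$-scaling in $\sqrt{2}\,H_n$ and $\sqrt{2}\,V_n$ is charged to the right place and not double-counted against the rotation block $Q_n$, and compute the initial conditions of the coupled system correctly from the $n=2$ bases of both \textbf{nsin2} and \textbf{nsin4}. Once the recurrences are in place, the characteristic-equation solution with the resonant particular part is a routine exercise.
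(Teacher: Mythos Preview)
Your proposal is correct and follows essentially the same route as the paper: set up the coupled recurrences for $\#a$ and $\#m$ from algorithms $\textbf{(\ref{algo:ms2})}$ and $\textbf{(\ref{algo:ms4})}$ using the sparse-factor costs $\#a(\sqrt{2}H_n)=n$, $\#a(\sqrt{2}V_n)=n-2$, $\#m(\sqrt{2}V_n)=2$, $\#a(Q_n)=n$, $\#m(Q_n)=2n$, back-substitute to obtain the second-order recurrence $u_t-u_{t-1}-2u_{t-2}=2^{t+1}-2$ (and its multiplication analogue), and solve with the two initial values. Your treatment is slightly more explicit than the paper's in spelling out the characteristic-root method and the resonant particular solution for the $2^t$ forcing, and in noting that the $t=2$ value must be read from the one-step coupled form; otherwise the argument is the same.
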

\begin{proof}
From algorithms $\bf{(\ref{algo:ms2})}$ and $\bf{(\ref{algo:ms4})}$
\begin{eqnarray}
\#a(\textrm{NDST-II}, n) & =& \#a\left(\textrm{NDST-II}, \frac{n}{2} \right) + \#a\left(\textrm{NDST-IV}, \frac{n}{2} \right) + \#a\left(\sqrt{2}\:{H}_n \right)
\nonumber \\
\#a\:(\textrm{NDST-IV}, n) & =& \#a\left(\sqrt{2}\:V_n \right) + 2\cdot \#a\left(\textrm{NDST-II}, \frac{n}{2} \right) + \#a\left(Q_n \right)
\label{as24}
\end{eqnarray}
Referring the structures of ${H}_n$, $V_n$, and $Q_n$ 
\begin{eqnarray}
\#a \left(\sqrt{2}\:{H}_n \right) = n, \:\: \#m \left(\sqrt{2}\:{H}_n \right) = 0
\nonumber \\
\#a \left(\sqrt{2}\:V_n \right) = n-2, \:\: \#m \left(\sqrt{2}\:V_n \right) = 2
\nonumber \\
\#a \left(Q_n \right) = n, \:\:  \#m \left(Q_n \right) = 2n
\label{rus24}
\end{eqnarray}
Thus
\[
\#a(\textrm{NDST-II}, n)  = \#a\left(\textrm{NDST-II}, \frac{n}{2} \right) + 2\cdot \#a \left(\textrm{NDST-II}, \frac{n}{4} \right) + 2n-2
\]
Since $n=2^t$ we can obtain the second order linear difference equation with respect to $t$
\[
\#a(\textrm{NDST-II}, 2^t)  - \#a\left(\textrm{NDST-II}, 2^{t-1} \right) - 2\cdot \#a \left(\textrm{NDST-II}, 2^{t-2} \right) = 2^{t+1} - 2.
\]
Solving the above under the initial conditions $\#a \left(\textrm{NDST-II}, 2 \right) = 2$ and 
\\
$\#a \left(\textrm{NDST-II}, 4 \right) = 8$, one can obtain 
\[
\#a(\textrm{NDST-II}, 2^t)=\frac{4}{3}nt - \frac{8}{9}n - \frac{1}{9} (-1)^t+1.  
\]
Also using initial conditions $\#m \left(\textrm{NDST-II}, 2 \right) = 0$ and $\#m \left(\textrm{NDST-II}, 4 \right) = 6$, one can derive the analogous result for the number of multiplications as 
\[
\#m(\textrm{NDST-II}, 2^t)=\frac{2}{3}nt + \frac{2}{9}n + \frac{7}{9} (-1)^t-1.  
\] 
\end{proof}


\begin{corollary}\label{Lso1-41}
Let $n=2^t\:(t \geq 2)$ be given. If DST-IV algorithm ((${\bf nsin4}(n)$)) is computed using algorithms $\bf{(\ref{algo:ms2})}$ and $\bf{(\ref{algo:ms4})}$ then the arithmetic cost of computing length $n$ DST-IV algorithm is given by  
\begin{eqnarray}
\#a(\textrm{NDST-IV}, n) &=& \frac{4}{3}nt-\frac{2}{9}n+\frac{2}{9}(-1)^t,
\nonumber \\
\#m(\textrm{NDST-IV}, n) &=& \frac{2}{3}nt+\frac{14}{9}n-\frac{14}{9}(-1)^t.
\label{cs4s4}
\end{eqnarray}
\end{corollary}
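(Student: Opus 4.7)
The corollary is essentially an immediate consequence of Lemma \ref{Ls41} together with the second half of the recursion system (\ref{as24}) already recorded in the proof of that lemma. My plan is not to set up a new difference equation for NDST-IV, but to use the one-step recursion
\[
\#a(\textrm{NDST-IV}, n) = \#a(\sqrt{2}\,V_n) + 2\,\#a\bigl(\textrm{NDST-II}, \tfrac{n}{2}\bigr) + \#a(Q_n),
\]
and the analogous identity for multiplications obtained in the same way from algorithm $\bf{(\ref{algo:ms4})}$:
\[
\#m(\textrm{NDST-IV}, n) = \#m(\sqrt{2}\,V_n) + 2\,\#m\bigl(\textrm{NDST-II}, \tfrac{n}{2}\bigr) + \#m(Q_n).
\]
Since Lemma \ref{Ls41} gives closed forms for $\#a(\textrm{NDST-II}, n)$ and $\#m(\textrm{NDST-II}, n)$, substituting $\tfrac{n}{2} = 2^{t-1}$ into those formulas directly yields the cost of the two recursive NDST-II subproblems.

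Concretely, first I would recall from (\ref{rus24}) the elementary counts $\#a(\sqrt{2}\,V_n) = n-2$, $\#a(Q_n) = n$, $\#m(\sqrt{2}\,V_n) = 2$, and $\#m(Q_n) = 2n$, which come from reading off the nonzero entries of $V_n$ and $Q_n$ as displayed in section \ref{sec:factor}. Next I would evaluate Lemma \ref{Ls41} at $n/2$, i.e.\ replace $n$ by $n/2$ and $t$ by $t-1$, and note the sign flip $(-1)^{t-1} = -(-1)^t$. Then I would just substitute into the two displayed one-step recursions and collect like terms in $nt$, $n$, and $(-1)^t$; the constant $+1$ from the additive NDST-II formula cancels against the $-2$ contributed by $\#a(\sqrt{2}\,V_n)$ plus $\#a(Q_n)$, which is what produces the clean form $\frac{4}{3}nt - \frac{2}{9}n + \frac{2}{9}(-1)^t$ without a constant term. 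An analogous cancellation on the multiplicative side yields $\frac{2}{3}nt + \frac{14}{9}n - \frac{14}{9}(-1)^t$.

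There is no real obstacle beyond bookkeeping: the entire argument is a one-line substitution followed by simplification. The only item worth double-checking is the initial-value consistency — at $t=2$ (so $n=4$) the formulas should give $\#a(\textrm{NDST-IV}, 4) = \frac{32}{3} - \frac{8}{9} + \frac{2}{9} = 10$ and $\#m(\textrm{NDST-IV}, 4) = \frac{16}{3} + \frac{56}{9} - \frac{14}{9} = \frac{48+42}{9} = 10$. These should agree with the one-step recursion using $\#a(\textrm{NDST-II}, 2) = 2$, $\#m(\textrm{NDST-II}, 2) = 0$, namely $4 + 2(2) + 2 = 10$ additions and $2 + 0 + 8 = 10$ multiplications, which they do. This sanity check confirms that no hidden base-case issue from $n=2$ propagates into the corollary, so the derivation is complete.
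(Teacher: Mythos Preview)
Your proposal is correct and follows essentially the same route as the paper: use the one-step recursion (\ref{as24}) for NDST-IV together with the elementary counts (\ref{rus24}), then substitute the closed form from Lemma~\ref{Ls41} evaluated at $n/2$ and simplify. The paper omits the $n=4$ sanity check you include, but otherwise the arguments are identical.
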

\begin{proof}
The number of additions required to compute DST-IV algorithm $\bf{(\ref{algo:ms4})}$ can be found by evaluating (\ref{as24});
\begin{eqnarray}
\#a\:(\textrm{NDST-IV}, n) &=&\#a\left(\sqrt{2}\:V_n \right) + 2 \cdot \#a\left(\textrm{NDCT-II}, \frac{n}{2} \right) + \#a\left(Q_n \right)
\nonumber \\
 &=& 2 \cdot \#a\left(\textrm{NDCT-II}, \frac{n}{2} \right) + 2n - 2.
\nonumber
\end{eqnarray}
Simplifying the above with (\ref{cs2s4}) at $\frac{n}{2}$ yields
\[
\#a(\textrm{NDCT-IV}, n) = \frac{4}{3}nt-\frac{2}{9}n+\frac{2}{9}(-1)^t.
\] 
Similarly, the number of multiplications required to compute new DST-IV algorithm can be found by evaluating (\ref{as24}) with (\ref{cs2s4}) at $\frac{n}{2}$ which yields 
\[
\#m(\textrm{NDST-IV}, n)=\frac{2}{3}nt+\frac{14}{9}n-\frac{14}{9}(-1)^t.
\]
\end{proof}

The following result is trivial because the DST-III algorithm (${\bf nsin3}(n)$) was stated using the DST-II algorithm (${\bf nsin2}(n)$).


\begin{corollary}\label{Lso2-41}
Let $n=2^t\:(t \geq 2)$ be given. If DST-III algorithm (${\bf nsin3}(n)$) is computed using algorithms $\bf{(\ref{algo:ms2})}$, $\bf{(\ref{algo:ms3})}$ and $\bf{(\ref{algo:ms4})}$ then the arithmetic cost of computing length $n$ DST-III algorithm is given by  
\begin{eqnarray}
\#a(\textrm{NDST-III}, n) &=& \frac{4}{3}nt-\frac{8}{9}n-\frac{1}{9}(-1)^t+1,
\nonumber \\
\#m(\textrm{NDST-III}, n) &=& \frac{2}{3}nt+\frac{2}{9}n+\frac{7}{9}(-1)^t - 1.
\label{cs3s3}
\end{eqnarray}
\end{corollary}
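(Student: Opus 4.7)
The plan mirrors Lemma~\ref{Ls41} almost line-for-line, so I expect the argument to be essentially mechanical. First I would read off from algorithm~\ref{algo:ms3} what one level of the recursion contributes: the routine ${\bf nsin3}(n)$ invokes ${\bf nsin4}(n/2)$ once, ${\bf nsin3}(n/2)$ once, applies the butterfly $\sqrt{2}\,H_n^T$, and finishes with the permutation $P_n$ (which is free). Since $H_n^T$ shares the sparsity pattern and the $\pm 1$ entries of $H_n$, the single-level overhead is $\#a(\sqrt{2}\,H_n^T)=n$ and $\#m(\sqrt{2}\,H_n^T)=0$, exactly as for $\sqrt{2}\,H_n$ in Lemma~\ref{Ls41}.

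These observations give the recurrences
\begin{align*}
\#a(\textrm{NDST-III},n) &= \#a\!\left(\textrm{NDST-III},\tfrac{n}{2}\right) + \#a\!\left(\textrm{NDST-IV},\tfrac{n}{2}\right) + n, \\
\#m(\textrm{NDST-III},n) &= \#m\!\left(\textrm{NDST-III},\tfrac{n}{2}\right) + \#m\!\left(\textrm{NDST-IV},\tfrac{n}{2}\right).
\end{align*}
I would then substitute the closed forms from Corollary~\ref{Lso1-41} for the NDST-IV terms, turning each line into a first-order linear difference equation in $t$ (with $n=2^t$) that can be solved telescopically. The alternating $(-1)^t$ piece contributes through the standard particular-solution ansatz, exactly as in Lemma~\ref{Ls41}.

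A cleaner shortcut, which I would use in the final write-up, is simply to observe that the two recurrences above have the \emph{same} form as the ones satisfied by $\#a(\textrm{NDST-II},n)$ and $\#m(\textrm{NDST-II},n)$ in the proof of Lemma~\ref{Ls41}, and that the base cases match as well: since ${\bf nsin3}(2)$ and ${\bf nsin2}(2)$ both return the $2\times 2$ butterfly, we obtain $\#a(\textrm{NDST-III},2)=2$ and $\#m(\textrm{NDST-III},2)=0$. By uniqueness of the solution of a linear difference equation with prescribed initial data, the NDST-III counts coincide with the NDST-II counts, which is precisely the stated formula. I foresee no real obstacle; the only place a slip could occur is in the butterfly count, so I would verify it by writing $\sqrt{2}\,H_n^T$ out explicitly and checking that each of its $n$ rows has exactly two nonzero $\pm 1$ entries, confirming $n$ additions and no multiplications.
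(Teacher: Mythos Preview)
Your argument is correct and matches the paper's own treatment: the paper declares the result ``trivial'' because ${\bf nsin3}(n)$ is built from ${\bf nsin2}(n)$ by transposition, and then in the subsequent Remark sketches exactly your first-order difference-equation route via Corollary~\ref{Lso1-41} with the same initial data $\#a(\textrm{NDST-III},2)=2$, $\#m(\textrm{NDST-III},2)=0$. Your ``shortcut'' (same recurrence, same base case, hence same solution as NDST-II) is a clean way to phrase the paper's triviality observation.
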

\begin{remark}
Using DST-III algorithm $\bf{(\ref{algo:ms3})}$ and the arithmetic cost of DST-IV algorithm in corollary (\ref{Lso1-41}), it is possible to obtain the first order linear difference equation with respect to $t$. By solving the said equation under initial conditions $\#a(\textrm{NDST-III}, 2)=2$ and $\#m(\textrm{NDST-III}, 2)=0$ respectively, one can obtain the same results as in corollary (\ref{Lso2-41}) for the number of additions and multiplications involving in DST-III algorithm.    
\end{remark}

\begin{lemma} \label{Ls42}
Let $n=2^t\:(t \geq 2)$ be given. If DST-I algorithm (${\bf nsin1}(n -1)$) is computed using algorithms $\bf{(\ref{algo:ms1})}$, $\bf{(\ref{algo:ms2})}$, $\bf{(\ref{algo:ms3})}$ and $\bf{(\ref{algo:ms4})}$ then the arithmetic cost of DST-I algorithms of length $n-1$ is given by  
\begin{eqnarray}
\#a\:(\textrm{NDST-I}, n-1) &=& \frac{4}{3}nt-\frac{14}{9}n+\frac{1}{18}(-1)^t-t + \frac{3}{2} 
\nonumber \\
\#m\:(\textrm{NDST-I}, n-1) &=& \frac{2}{3}nt-\frac{10}{9}n-\frac{7}{18}(-1)^t + \frac{3}{2} 
\label{cs1s1}
\end{eqnarray}
\end{lemma}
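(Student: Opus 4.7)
The plan is to mirror the approach used for Lemma \ref{Ls41} and Corollary \ref{Lso1-41}, but now exploiting the fact that the recursion in Algorithm \ref{algo:ms1} calls DST-I itself (not another DST-IV), so the resulting recurrence is first-order in $t$ rather than second-order.

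First I would read off the cost of the butterfly stage $\sqrt{2}\,\widehat{H}_{n-1}$ directly from its sparsity pattern. The block structure of $\widehat{H}_{n-1}$ shows that applying $\sqrt{2}\,\widehat{H}_{n-1}$ to a vector performs $\frac{n}{2}-1$ sums and $\frac{n}{2}-1$ differences on the paired entries, while the single middle coordinate is scaled by $\sqrt{2}$. Hence $\#a(\sqrt{2}\,\widehat{H}_{n-1})=n-2$ and $\#m(\sqrt{2}\,\widehat{H}_{n-1})=1$. Combined with Algorithm \ref{algo:ms1}, this yields the basic recurrences
\begin{align*}
\#a(\mathrm{NDST\text{-}I},n-1) &= \#a\!\left(\mathrm{NDST\text{-}III},\tfrac{n}{2}\right)+\#a\!\left(\mathrm{NDST\text{-}I},\tfrac{n}{2}-1\right)+(n-2),\\
\#m(\mathrm{NDST\text{-}I},n-1) &= \#m\!\left(\mathrm{NDST\text{-}III},\tfrac{n}{2}\right)+\#m\!\left(\mathrm{NDST\text{-}I},\tfrac{n}{2}-1\right)+1.
\end{align*}

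Next I would substitute the closed-form expressions from Corollary \ref{Lso2-41} for $\#a(\mathrm{NDST\text{-}III},n/2)$ and $\#m(\mathrm{NDST\text{-}III},n/2)$. Writing $n=2^t$, $a_t := \#a(\mathrm{NDST\text{-}I},2^t-1)$ and $m_t := \#m(\mathrm{NDST\text{-}I},2^t-1)$, and simplifying, this collapses to two first-order linear difference equations of the form
\[
a_t-a_{t-1} = \alpha\, t\,2^t+\beta\, 2^t+\gamma(-1)^t+\delta,\qquad
m_t-m_{t-1} = \alpha'\, t\,2^t+\beta'\, 2^t+\gamma'(-1)^t+\delta',
\]
with explicit constants read off from Corollary \ref{Lso2-41} (so $\alpha=\tfrac{2}{3}$, and the butterfly contributes the extra $n-2$ in the addition case and $+1$ in the multiplication case).

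I would then solve each recurrence by straightforward telescoping, $a_t=a_1+\sum_{k=2}^{t}(a_k-a_{k-1})$, using the standard closed forms
\[
\sum_{k=2}^{t} k\,2^k,\qquad \sum_{k=2}^{t} 2^k,\qquad \sum_{k=2}^{t}(-1)^k,
\]
and pinning down the integration constants from the base values $\#a(\mathrm{NDST\text{-}I},1)=\#m(\mathrm{NDST\text{-}I},1)=0$ (or, equivalently, from $t=2$ using the $3\times 3$ example in Example \ref{Es32}). Re-expressing $2^t$ as $n$ at the end then produces the advertised formulas.

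The main obstacle is purely bookkeeping: the forcing term contains both $t\cdot 2^t$ and $(-1)^t$, so one must be careful to separate the particular solutions for these two inhomogeneities and then collect the $\tfrac{1}{9}$ and $\tfrac{1}{18}$ fractional constants correctly. As a sanity check, I would verify the formulas at the small cases $t=2$ (size-$3$ DST-I) and $t=3$ (size-$7$ DST-I) against direct operation counts obtained from the factorization in Example \ref{Es32}; any off-by-one in the butterfly count or in the base case would surface immediately there.
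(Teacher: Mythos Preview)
Your approach is essentially identical to the paper's: it too reads off $\#a(\sqrt{2}\,\widehat H_{n-1})=n-2$ and $\#m(\sqrt{2}\,\widehat H_{n-1})=1$, substitutes the DST-III cost at $n/2$, reduces to a first-order linear recurrence in $t$, and solves with the $t=1$ initial data. The one discrepancy is your base value for multiplications: the paper takes $\#m(\textrm{NDST-I},1)=1$, not $0$ (the scalar output at $n=2$ is $\sqrt{2}\,S_1^I$, so one multiplication is charged), and indeed the claimed formula evaluates to $1$ at $t=1$; with your choice $\#m(\textrm{NDST-I},1)=0$ the telescoped constant would be off by one, though your proposed sanity check against Example~\ref{Es32} would catch this.
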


\begin{proof}
Referring DST-I algorithm $\bf{(\ref{algo:ms1})}$
\begin{equation}
\#a\:(\textrm{NDST-I}, n-1)  = \#a\: \left(\textrm{NDST-I}, \frac{n}{2}-1 \right) + \#a\: \left(\textrm{NDST-III}, \frac{n}{2} \right) + \#a\: \left(\sqrt{2}\:\widehat{H}_{n-1} \right)
\label{as13}
\end{equation}
Following the structure of $\widehat{H}_{n-1}$ leads to  
\begin{equation}
\begin{matrix}
\#a \left(\sqrt{2}\:\widehat{H}_{n-1} \right) = n-2, & \#m \left(\sqrt{2}\:\widehat{H}_{n-1} \right) = 1
\label{ths13}
\end{matrix}
\end{equation}
Using arithmetic cost of DST-III (\ref{cs2s4}) at $\frac{n}{2}$ and (\ref{ths13}), we can rewrite (\ref{as13})
\[
\#a\:(\textrm{NDST-I}, n-1)  = \#a\:\left(\textrm{NDST-I}, \frac{n}{2}-1 \right) + \left( \frac{2n}{3}(t-1)- \frac{4n}{9}+\frac{1}{9}(-1)^{t}+1\right) + n-2
\]
Since $n=2^t$ the above simplifies to the first order linear difference equation with respect to $t \geq 2$
\[
\#a\:(\textrm{NDST-I}, 2^t-1)  - \#a\: \left(\textrm{NDST-I}, 2^{t-1}-1 \right) = \frac{2}{3}t\cdot2^t-\frac{1}{9}2^t+\frac{1}{9}(-1)^t-1
\]
Solving the above first order linear difference equation (with respect to $t$) using the initial condition $\#a\: \left(\textrm{NDST-I}, 1\right)= 0$, one can obtain
\[
\#a\:(\textrm{NDST-I}, 2^t-1)= \frac{4}{3}nt-\frac{14}{9}n+\frac{1}{18}(-1)^t-t+\frac{3}{2}  
\]
Also using initial condition $\#m \left(\textrm{NDST-I}, 1 \right) = 1$, one can derive the analogous result for the number of multiplications as 
\[
\#m\:(\textrm{NDST-I}, n-1) = \frac{2}{3}nt-\frac{10}{9}n-\frac{7}{18}(-1)^t + \frac{3}{2} 
\] 
\end{proof}

\subsection{Numerical illustration of the arithmetic cost of computing fast, efficient, and completely recursive DST I-IV algorithms}
\label{sebs:numcost}
The following numerical experiments are done to illustrate the number of additions and multiplications required to compute fast, efficient, and completely recursive DST I-IV algorithms having sparse, scaled orthogonal, rotational, and rotational-reflection factors. Matrices are used with the sizes from $8 \times 8$ to $4096 \times 4096$. These are implemented using MATLAB version 8.3 (R2014a).    
\\
\\
Figure (\ref{figcostADST}) and (\ref{figcostMDST}) illustrate the number of additions and multiplications required to compute DST I-IV algorithms corresponding to lemma \ref{Ls41}, corollary \ref{Lso1-41}, corollary \ref{Lso2-41}, lemma \ref{Ls42} respectively with comparison to the $n \:{\rm log}\:n$ operations.   
\\\\\
\begin{figure}[h]
\centering
\begin{subfigure}{0.4\textwidth}
\includegraphics[width=\textwidth]{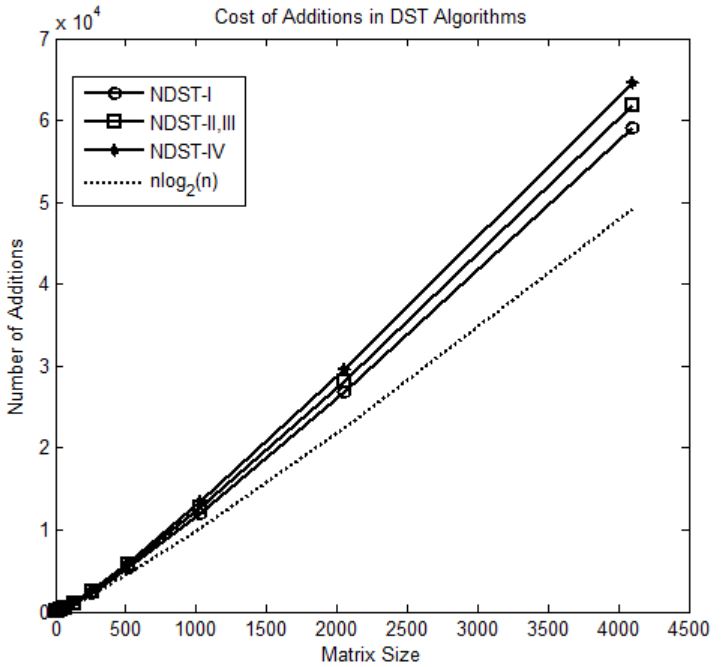}
\caption{}
\label{figcostADST}
\end{subfigure} \hspace{.3in}
\begin{subfigure}{0.4\textwidth}
\includegraphics[width=\textwidth]{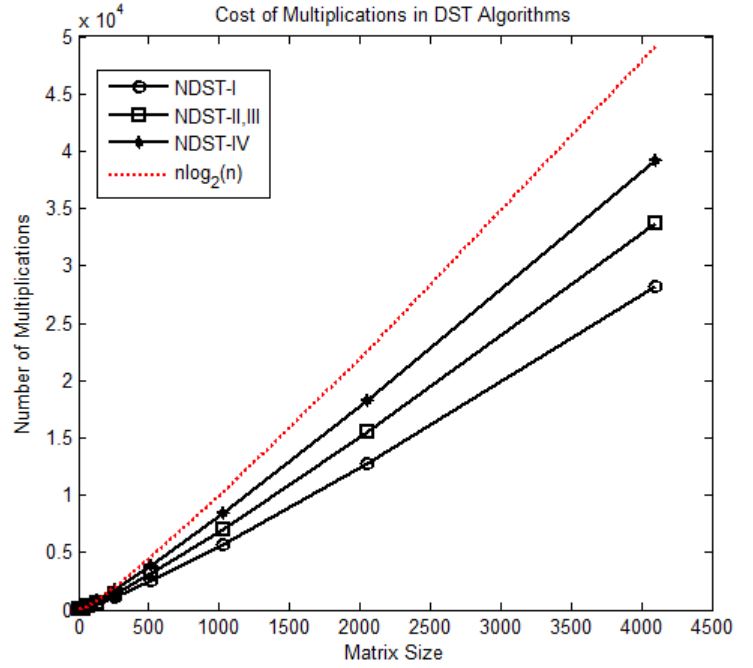}
\caption{}
\label{figcostMDST}
\end{subfigure}
\caption{(\ref{figcostADST}) Number of additions in computing DST I-IV algorithms with $n \:{\rm log}\:n$ \:(\ref{figcostMDST}) Number of multiplications in computing DST I-IV algorithms with $n \:{\rm log}\:n$ }
\label{fig:AMCS}
\end{figure}   
\section{Signal flow graphs for fast, efficient, and completely recursive DST I-IV Algorithms}
\label{sec:SFG}
In this section we use signal flow graphs to elaborate fast, efficient, and completely recursive DST I-IV algorithms having sparse, scaled orthogonal, rotation, rotation-reflection, butterfly matrices for $n=16$ and use those results to elaborate generalized $n$ points flow graphs for these DST algorithms. Note that as stated in section \ref{sec:factor}, we have developed DST I-IV algorithms to reduce the cost of multiplications. Hence, based on the cheap cost of multiplication, we can develop signal flow graphs for these DST I-Iv only by using few multipliers which is opposed to the existing DST I-IV flow graphs.      
\\\\
These signal flow graphs of DST algorithms are drawn with respect to the decimation-in-frequency having the input signal ${\bf x}$ in order and output signal ${\bf y}$ in scrambled. So for a given input signal ${\bf x}$, this section present signal flow graphs for output signal ${\bf y}=\sqrt{n}S_{n-1}^I\:{\bf x}, {\bf y}=\sqrt{n}S_{n}^{II}\:{\bf x}, {\bf y}=\sqrt{n}S_{n}^{III}\:{\bf x}$, and ${\bf y}=\sqrt{n}S_{n}^{IV}\:{\bf x}$. As shown in the flow graphs, in each graph signal flows from the left to the right. However, it is possible to convert the decimation-in-frequency DST algorithms into decimation-in-time DST algorithms applying multiplications before additions and using the identical computation complexity (same as in section \ref{sec:cost}) as in decimation-in-frequency DST algorithms.  
\\\\
In each Figure from \ref{figS1} until \ref{figGS4}, multiplication with -1 is denoted by a dotted line and notations $\epsilon:=\frac{1}{\sqrt{2}}$, $C_{i,j}:=\cos\frac{i \pi}{2^j}$, and $S_{i,j}=\sin \frac{i\pi}{2^j}$ for positive integers $i$ and $j$ are used. 


\subsection{Signal flow graphs for DST I-IV algorithms when $n=16$}
\label{ss:s16s}
Let us state the signal flow graph for DST I-IV computed via ${\bf nsin1}(n-1)$, ${\bf nsin2}(n)$, ${\bf nsin3}( n)$, ${\bf nsin4}( n)$. Here we draw the flow graphs for $n=16$ with the help of factorizations of DST I-IV algorithms as stated in section \ref{subs:FCalgo}. 
\\\\
Signal flow graphs for 15-point NDST-I ($4S^I_{15}$) and 16-point NDST II-IV (4$S^{II}_{16}$, 4$S^{III}_{16}$, and 4$S^{IV}_{16}$) algorithms are presented via Figures \ref{figS1}, \ref{figS2}, \ref{figS3}, and \ref{figS4}. 
\begin{figure}[h]
\center
\includegraphics[width=5in,height=6in]{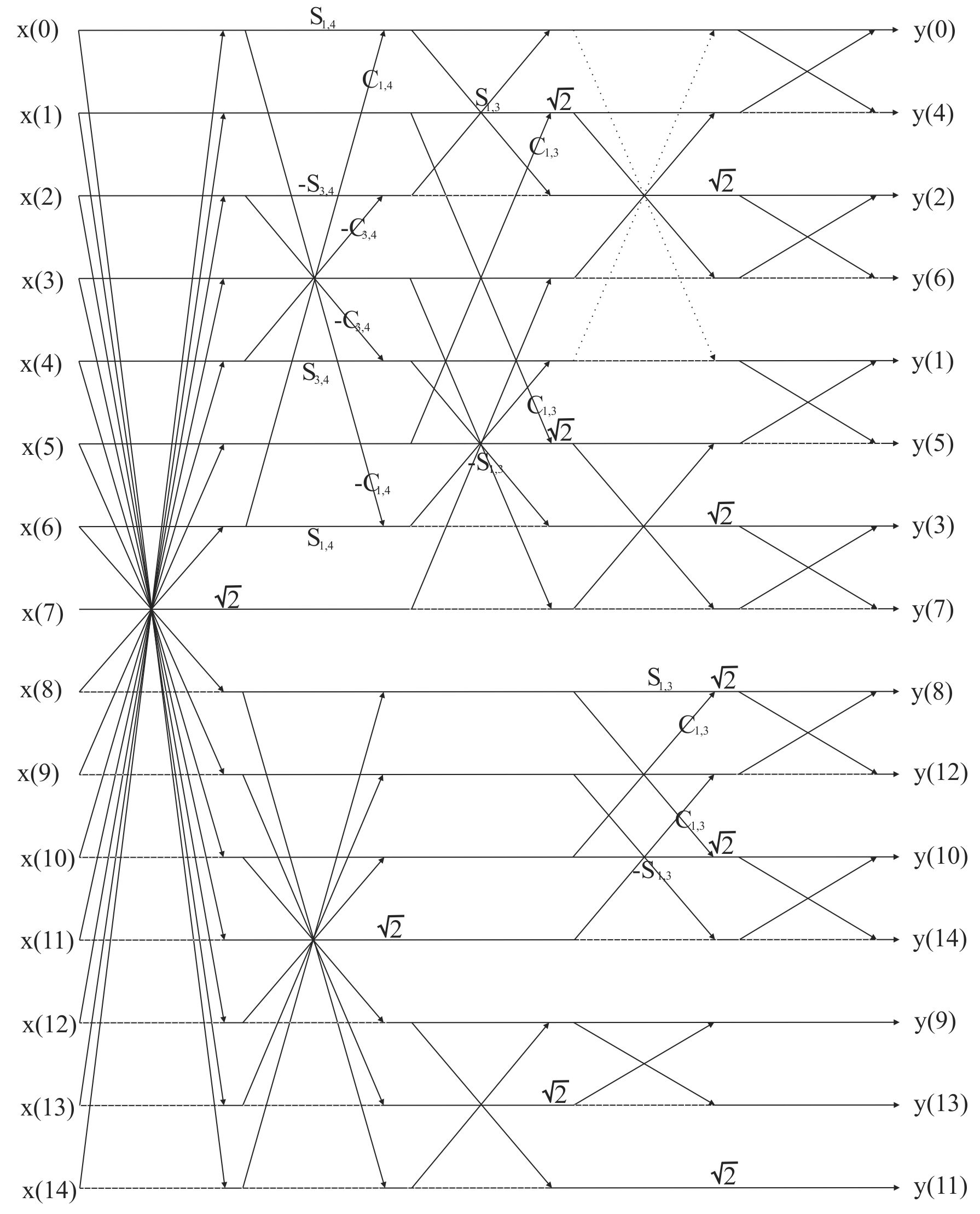}
\caption{Flow graph for 15-point NDST-I ($4S^{I}_{15}$)}
\label{figS1} 
\end{figure}
\begin{figure}[h]
\center
\includegraphics[width=5in,height=6in]{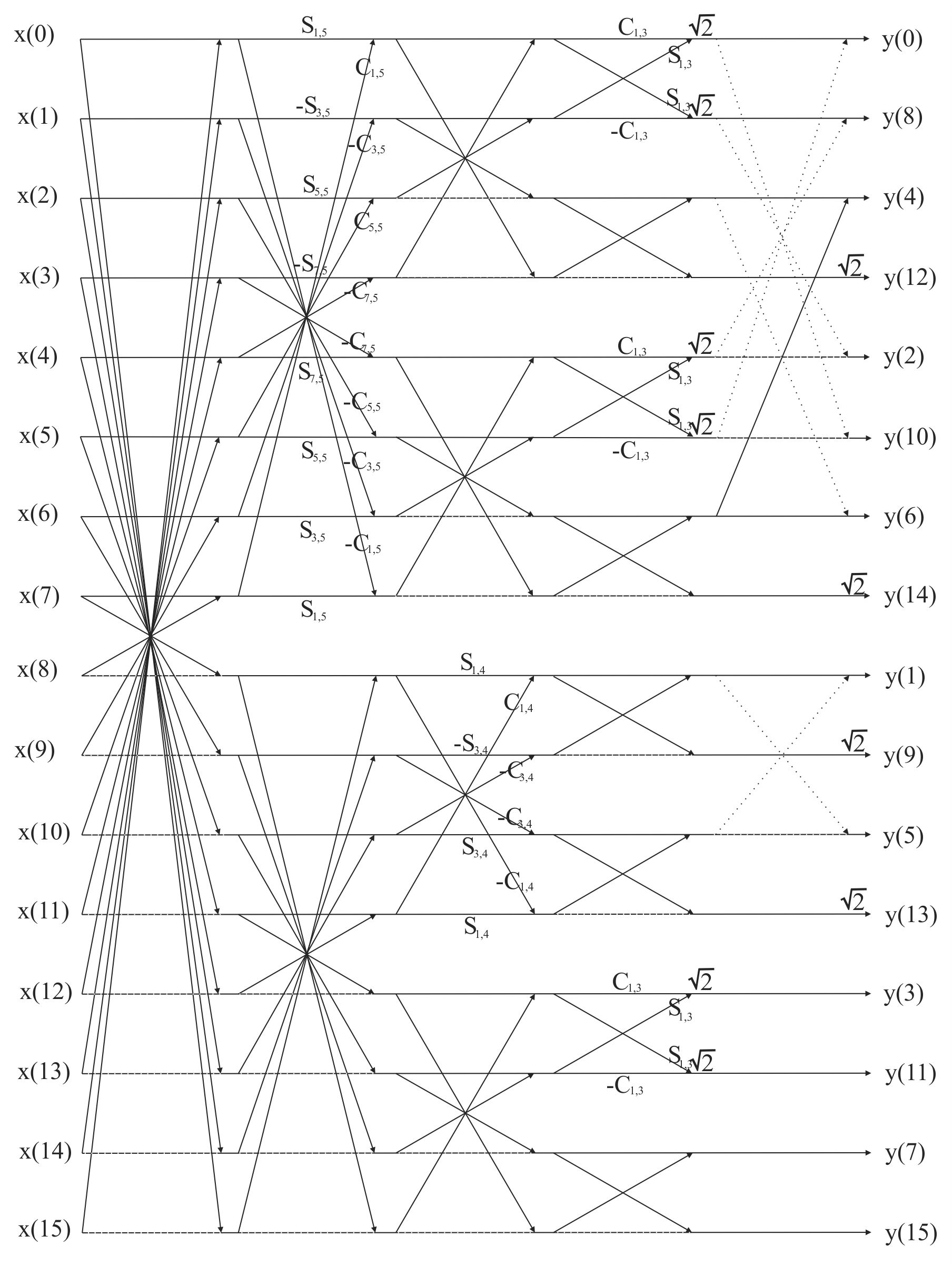}
\caption{Flow graph for 16-point NDST-II ($4S^{II}_{16}$)}
\label{figS2} 
\end{figure}
\begin{figure}[h]
\center
\includegraphics[width=5in,height=6in]{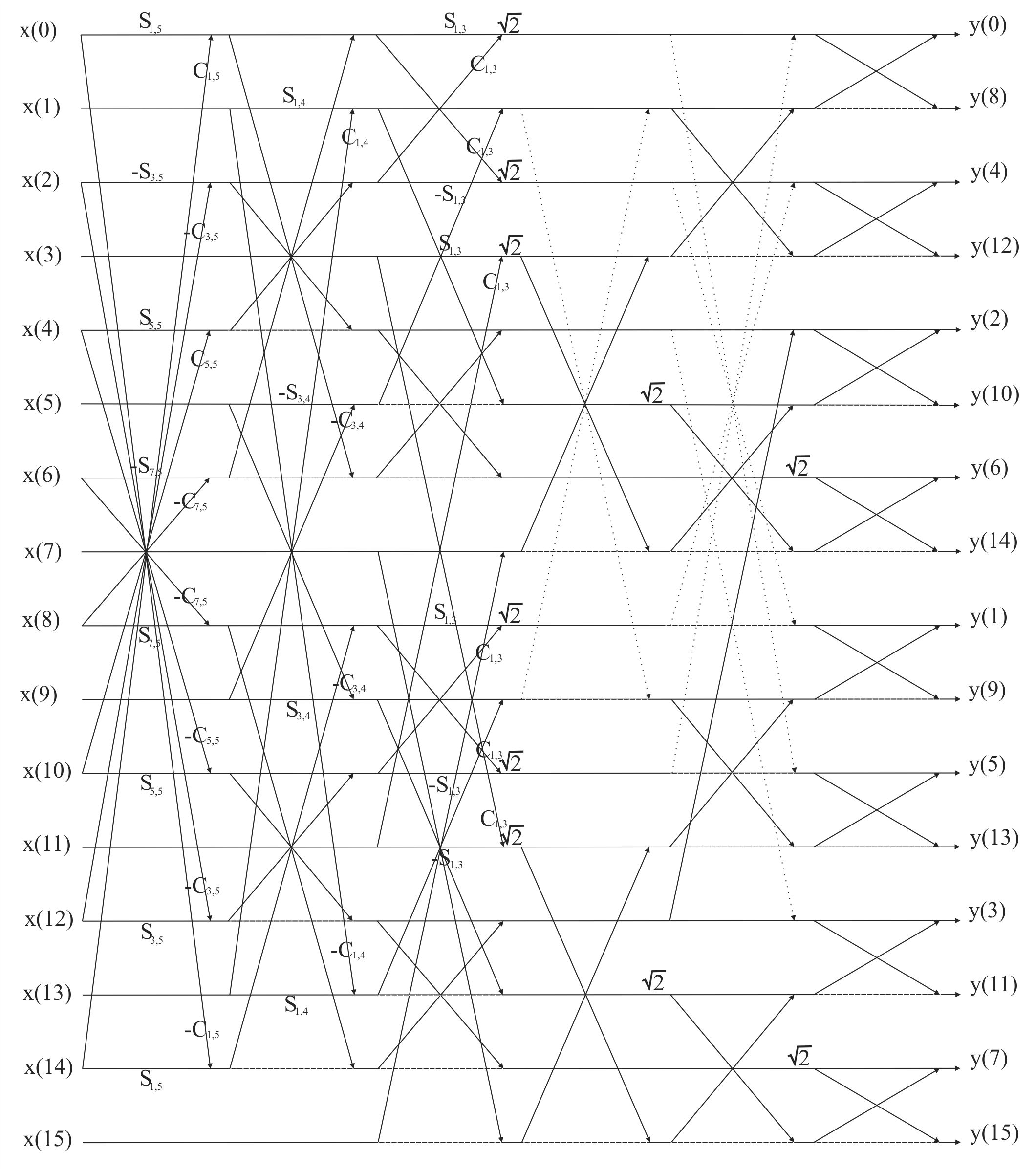}
\caption{Flow graph for 16-point NDST-III ($4S^{III}_{16}$)}
\label{figS3} 
\end{figure}
\begin{figure}[h]
\center
\includegraphics[width=5in,height=6in]{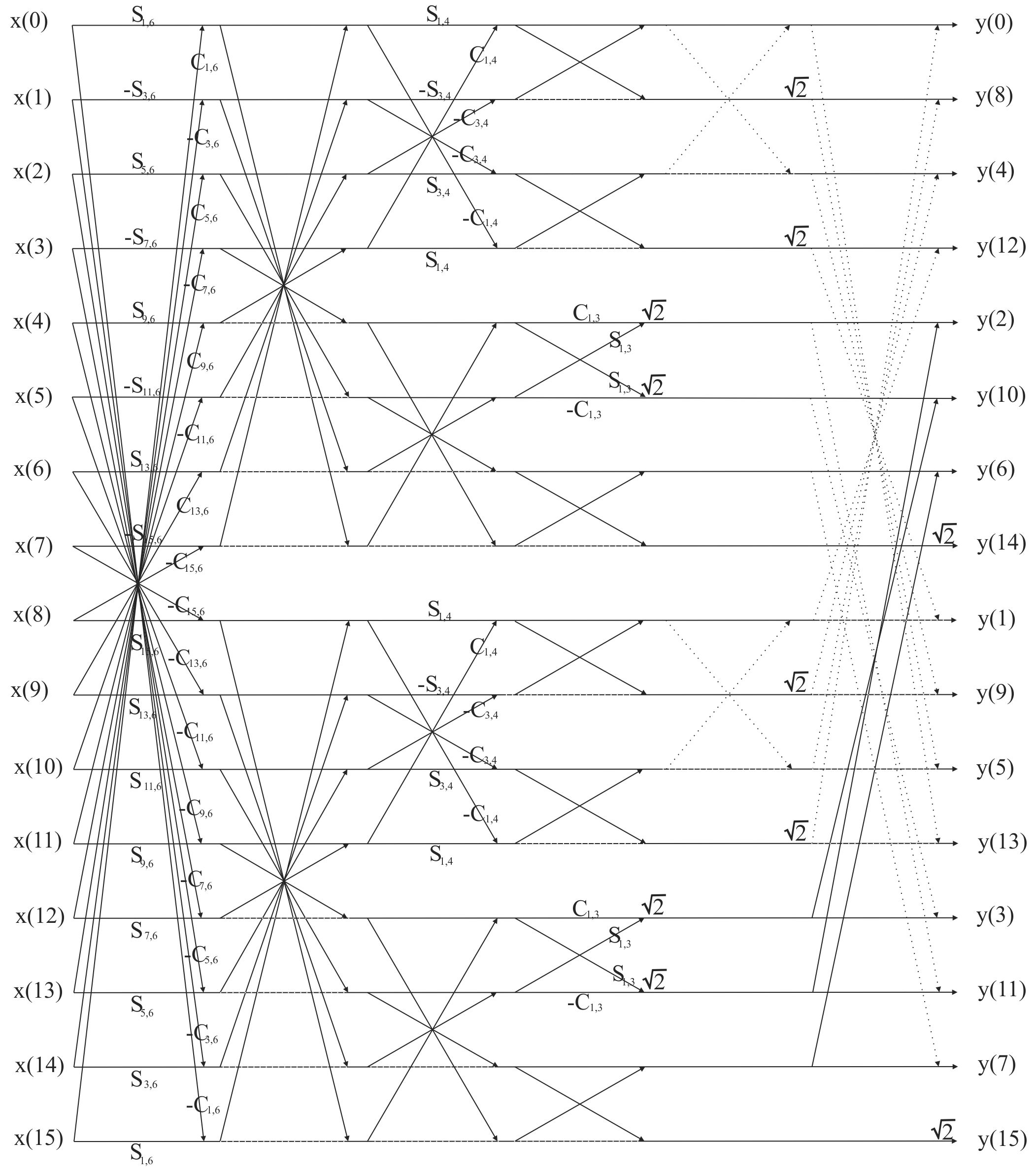}
\caption{Flow graph for 16-point NDST-IV ($4S^{IV}_{16}$)}
\label{figS4} 
\end{figure}
\\\\
The flow graphs for 16-point NDST II-IV algorithms stated via Figures \ref{figS2}, \ref{figS3}, and \ref{figS4}, the input signals ${\bf x}$ are in order and output signals ${\bf y}$ are in bit-reversed order. Thus in bit-reversed order, each output index is represented as a binary number and the indices' bits are reversed.  
\subsection{Generalized signal flow graphs for DST I-IV algorithms}
\label{ss:sgs}
Here we present generalized $(n-1)$ points signal flow graph for DST-I and $n$ points signal flow graphs for DST II-IV based on DST algorithms stated in the section \ref{subs:FCalgo} and the flow graphs drawn in the section \ref{ss:s16s}. The generalized signal flow graphs for fast and completely recursive DST I-IV algorithms can be illustrated via Figures \ref{figGS1}, \ref{figGS2}, \ref{figGS3}, and \ref{figGS4}
\begin{figure}[h]
\center
\includegraphics[width=5in,height=3.2in]{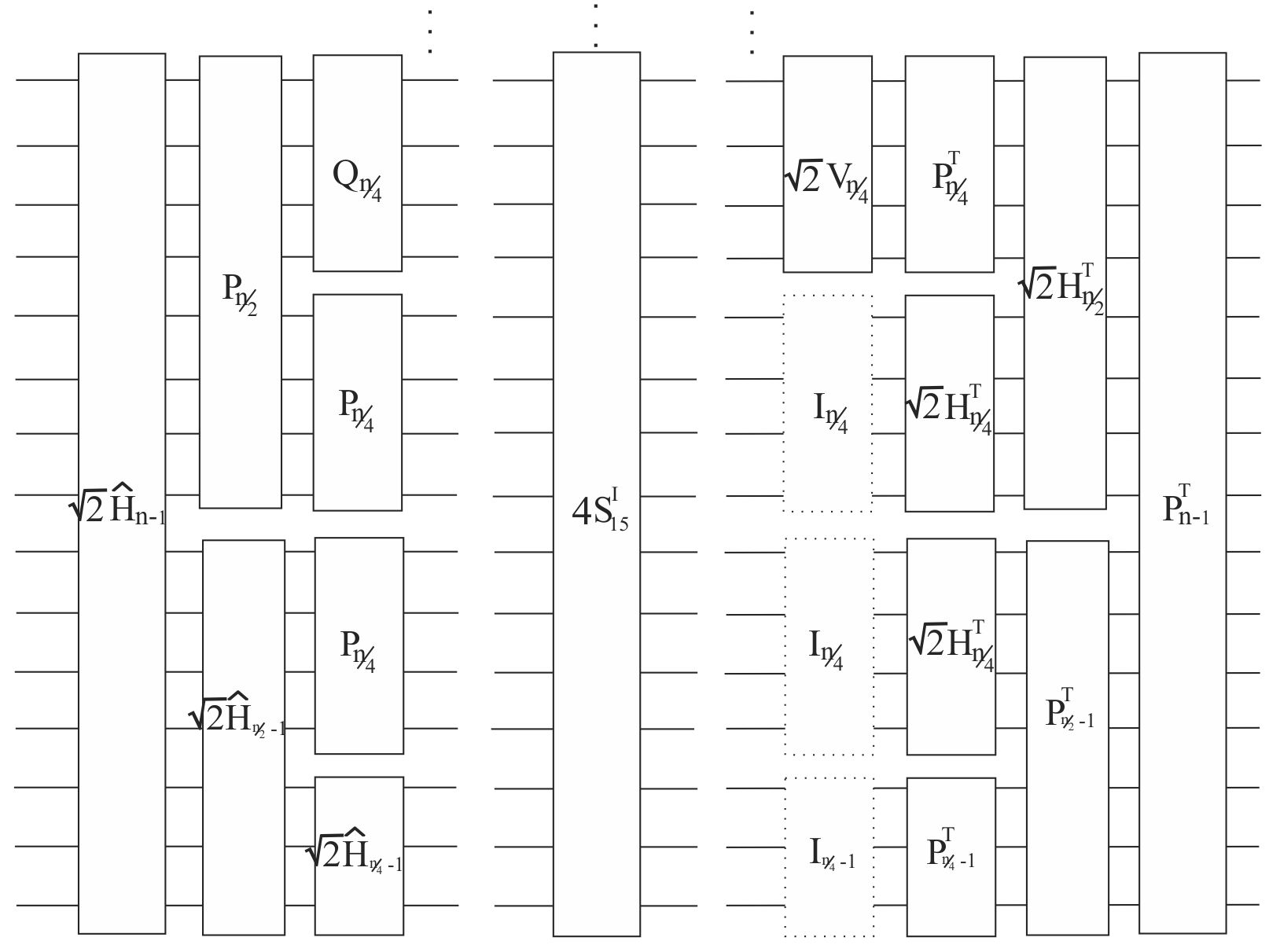}
\caption{Flow graph for $n-1$ points NDST-I ($\sqrt{n}\:S^{I}_{n-1}$)}
\label{figGS1} 
\end{figure}
\begin{figure}[h]
\center
\includegraphics[width=5in,height=3.2in]{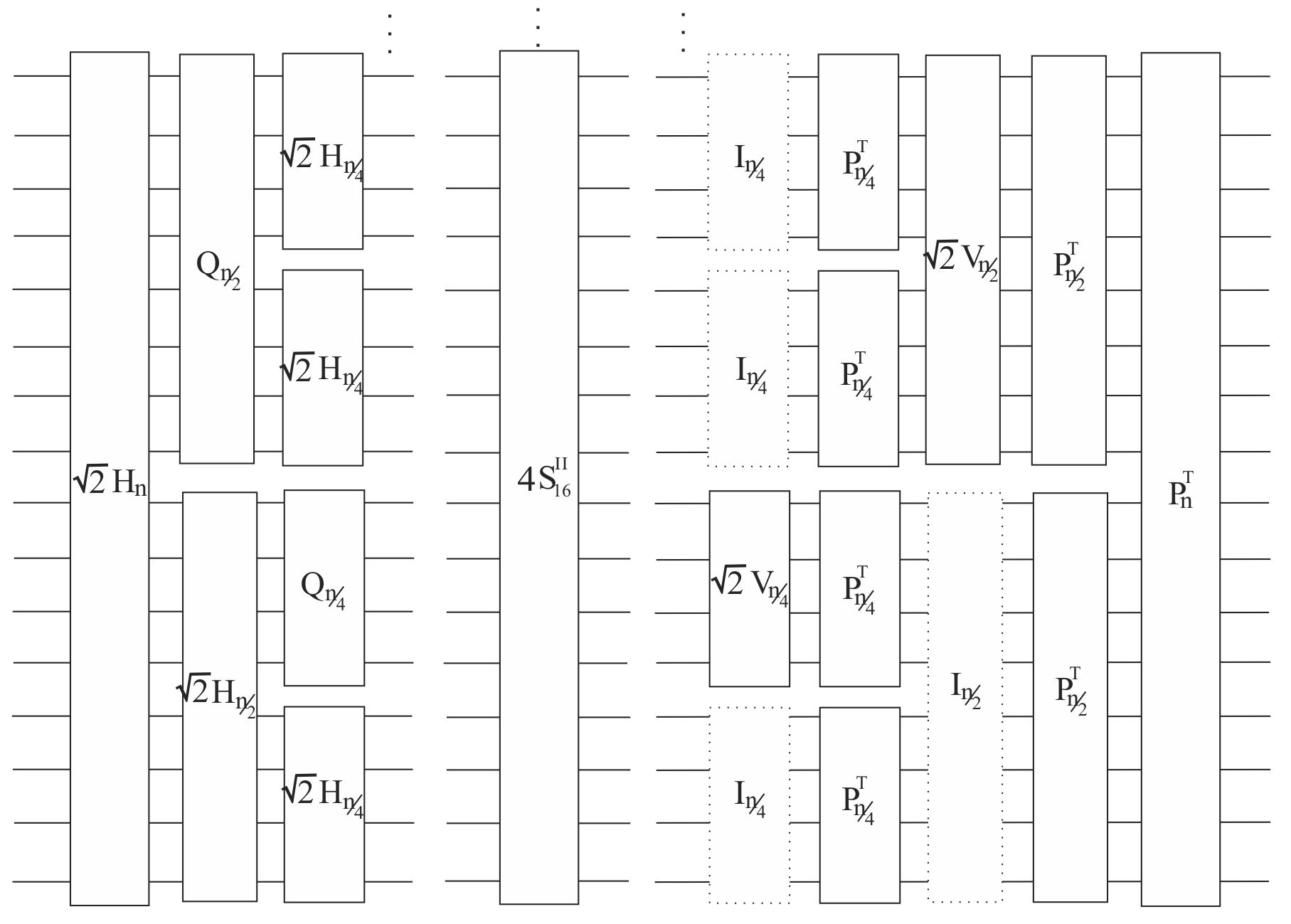}
\caption{Flow graph for $n$ points NDST-II ($\sqrt{n}\:S^{II}_{n}$)}
\label{figGS2} 
\end{figure}
\begin{figure}[h]
\center
\includegraphics[width=5in,height=3.2in]{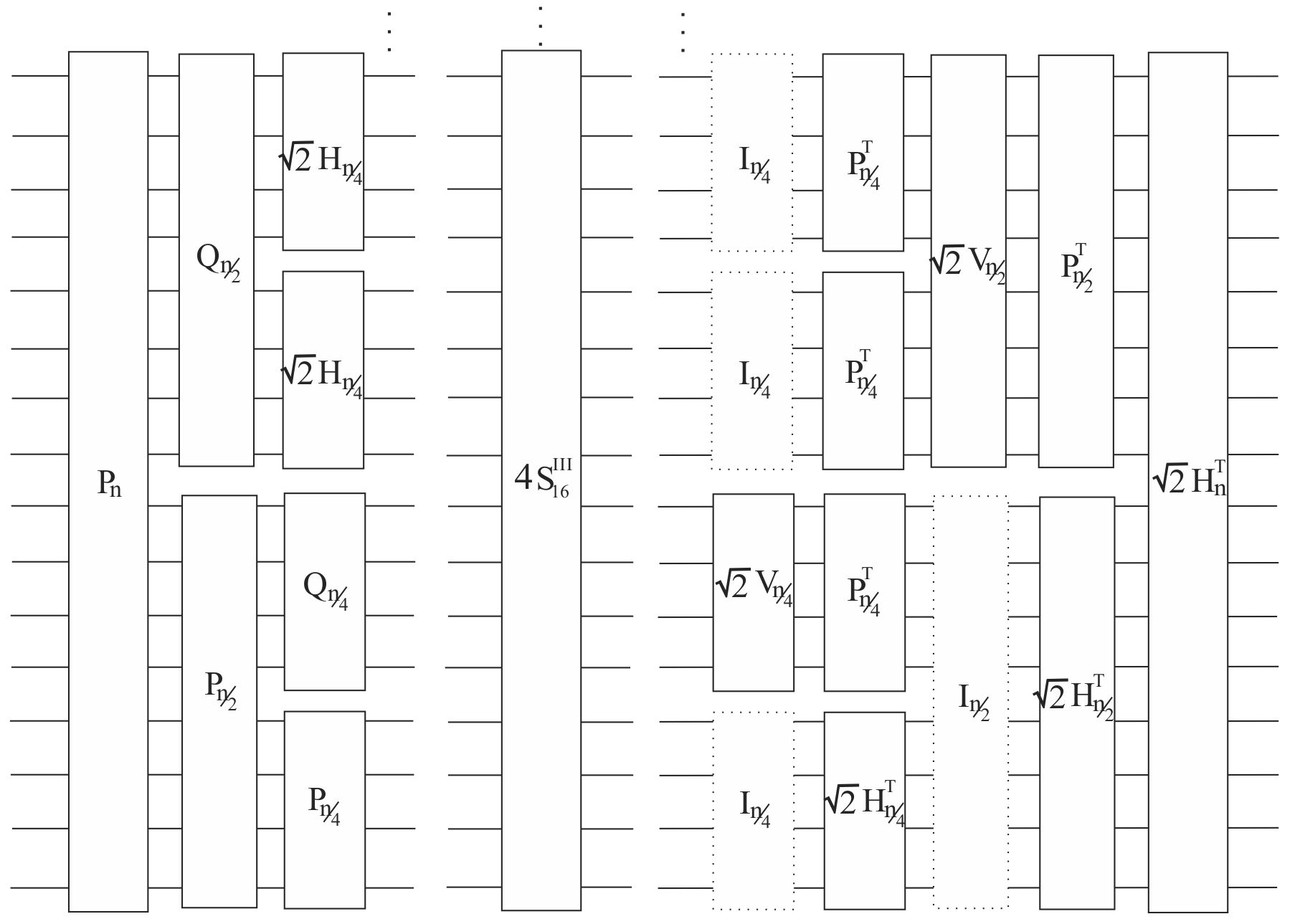}
\caption{Flow graph for $n$ points NDST-III ($\sqrt{n}\:S^{III}_{n}$)}
\label{figGS3} 
\end{figure}
\begin{figure}[h]
\center
\includegraphics[width=5in,height=3.2in]{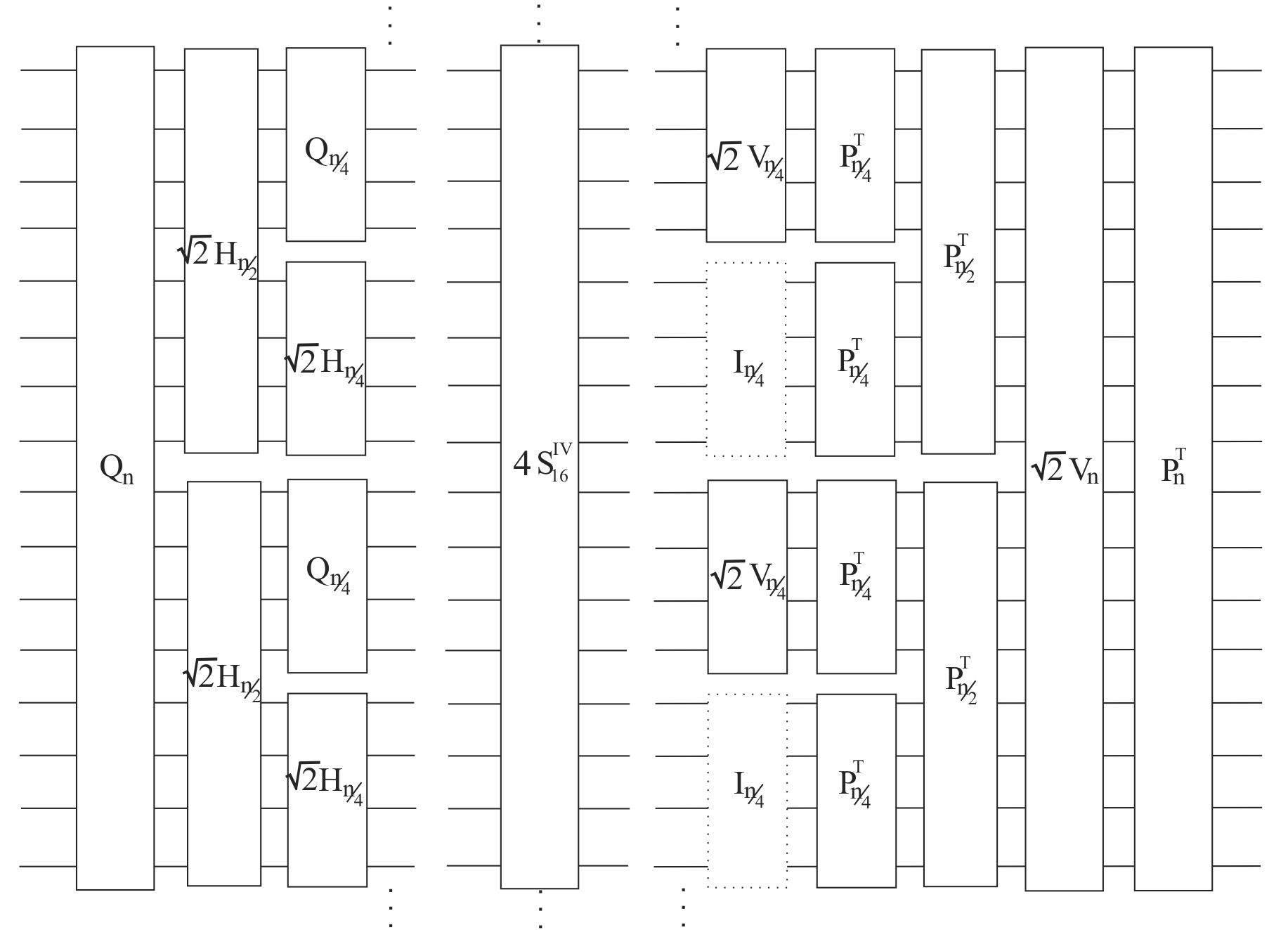}
\caption{Flow graph for $n$ points NDST-IV ($\sqrt{n}\:S^{IV}_{n}$)}
\label{figGS4} 
\end{figure}

\section{Conclusion}
In this paper, we have provided fast, efficient, and completely recursive DST I-IV algorithms, which are solely defined via DST I-IV, having sparse, scaled orthogonal, rotational, rotational-reflection, and butterfly matrices while providing the corresponding arithmetic complexity of the said algorithms. Moreover, the language of signal flow graphs is used to show the connection between factors of these DST algorithms and $(n-1)$ points DST-I flow graph and $n$ points DST II-IV flow graphs.

\newpage


\normalsize




\begin{thebibliography}{00}
\addcontentsline{toc}{chapter}{References}
\setlength{\itemsep}{-1mm}
\footnotesize


\bibitem{BDP12} D. Belega, D. Dallet, and D. Petri. Accuracy of the Normalized Frequency Estimation of a Discrete-Time Sine-Wave by the Energy-Based Method. {\it IEEE Transactions on Instrumentation and Measurement}, 61(1):111-121, 2012. 



\bibitem{B13} V. Britanak. New generalized conversion method of the MDCT and MDST coefficients in the frequency domain for arbitrary symmetric windowing function. {\it Digital Signal Processing}, 23:1783-1797, 2013.

\bibitem{B11} V. Britanak. A survey of efficient MDCT implementations in MP3 audio coding standard: Retrospective and state-of-the-art. {\it Signal Processing}, 91:624-672, 2011. 

\bibitem{B09} V. Britanak. New universal rotation-based fast computational structures for an efficient implementation of the DCT-IV/DST-IV and analysis/synthesis MDCT/MDST filter banks. {\it Signal Processing}, 89:2213-2232, 2009.

\bibitem{BR00} V. Britanak and K. R. Rao. Two-dimensional DCT/DST universal computational structure for $2m \time 2n$ block sizes. {\it IEEE Transactions on Signal Processing}, 48 (11):3250-3255, 2000.

\bibitem{BYR06} V. Britanak, P. C. Yip and K. R. Rao. {\it Discrete Cosine and Sine Transforms: General Properties, Fast Algorithms and Integer Approximations}. Academic Press, Great Britten, 2007. 

\bibitem{CR12} S. Chakraborty and K. R. Rao, Fingerprint enhancement by directional filtering, In: {\it 2012 9th International Conference on Electrical Engineering/Electronics, Computer, Telecommunications and Information Technology (ECTI-CON)},Thailand, May 2012, doi: 10.1109/ECTICon.2012.6254113.

\bibitem{CSF77} W. H. Chen, C.H. Smith, and S. Fralick. A fast computational algorithm for the discrete cosine transform. 
{\it IEEE Trans. Comm.}, 25:1004-1009, 1977.

\bibitem{CT65} J. W. Cooley and J. W. Tukey. An algorithm for the machine calculation of complex Fourier series. {\it Math. Comp.}, 19:297-301, 1965. 

\bibitem{DJ11} S. Dhamija and P. Jain. Comparative Analysis for Discrete Sine Transform as a suitable method for noise estimation. {\it International Journal of Computer Science Issues}, 8(5):162-164, 2011. 

\bibitem{FMW13} D. Fan, X. Meng, Y. Wang, X. Yang, X. Peng, W. He, G. Dong and H. Chen. Optical identity authentication scheme based on elliptic curve digital signature algorithm and phase retrieval algorithm. {\it Applied Optics}, 52(23):5645-5652, 2013. 

\bibitem{HSMR12} J. Han, A. Saxena, V. Melkote and K. Rose, Towards jointly optimal spatial prediction and adaptive transform in video/image coding, {\it IEEE Transactions on Image Processing}, 21(4):1874-1884, 2012.


\bibitem{HXL14} H. Huang, L. Xiao, and J. Liu. CORDIC-Based Unified Architectures for Computation of DCT/IDCT/DST/IDST. {\it Circuits Syst Signal Process}, 33:799-814, 2014.

\bibitem{J79} A. K. Jain. A sinusoidal family of unitary transform. {\it IEEE. Trans. Pattern Anal. Mach. Intell.}, PAMI-1:356-365, 1979.

\bibitem{J76} A. K. Jain. A fast Karhunen-Loeve transform for a class of stochastic processes. {\it IEEE. Trans. Commun.}, COM-24:1023-1029, 1976.

\bibitem{JKJ09} P. Jain, B. Kumar, and S. B. Jain. Unified recursive structure for forward and inverse modified DCT/DST/DHT. {\it IETE Journal of Research}, 55(4): 180-191, 2009.

\bibitem{KO96} T. Kailath and V. Olshevsky. Displacement structure approach to discrete trigonometric transform based preconditioners of G.Strang and T.Chan types. {\it Calcolo}, 33(3-4):191-208, 1996.

\bibitem{KSS14} H. B. Kekre, T. K Sarode, and J. K. Save. Column Transform based Feature Generation for Classification of Image Database. {\it International Journal of Application or Innovation in Engineering and Management}, 3(7):172-181, 2014.

\bibitem{KSN14} H. B. Kekre, T. Sarode, and P. Natu. Performance Comparison of Hybrid Wavelet Transform Formed by Combination of Different Base Transforms with DCT on Image Compression. {\it I.J. Image, Graphics and Signal Processing}, 4:39-45, 2014.

\bibitem{KS78} H. B. Kekre and J. K. Solanki. Comparative performance of various trigonometric unitary transforms for transform image coding. {\it Int. J. Electron.}, 44:305-315, 1978.


\bibitem{KR09} D. Kim and K. R. Rao. 2D-DST scheme for image mirroring and rotation. {\it J. of Electronic Imaging}, 17(1), 2009, doi:10.1117/1.2885257. 

\bibitem{KGP01} R. Kouassi, P. Gouton, and M. Paindavoine. Approximation of the Karhunen Loe`ve transformation and its application to colour images. {\it Signal Processing: Image Communication}, 16:541-551, 2001.


\bibitem{LKKP13} M.H. Lee, M.H.A. Khan, K.J. Kim, and D. Park. A Fast Hybrid Jacket-Hadamard Matrix Based Diagonal Block-wise Transform. {\it Signal Processing: Image Communication}, 29(1):49-65, 2014.

\bibitem{MPH12} J. Ma, G. Plonka, and M. Y. Hussaini. Compressive Video Sampling with Approximate Message Passing Decoding. {\it IEEE Transactions on Circuits and Systems for Video Technology}, 22(9):1354-1364, 2012.


\bibitem{M94} S. A. Martucci. Symmetric convolution and the discrete sine and cosine transforms. {\it IEEE Trans. on Signal Processing}, 42(5): 1038-1051, 1994.

\bibitem{OOW03} A. Olshevsky, V. Olshevsky, and J. Wang. A comrade-matrix-based derivation of the eight versions of fast cosine and sine transforms. In: V. Olshevsky(ed.) {\it Contemporary Mathematics}, 323:119-150, AMS publications, Providence, RI, 2003.

\bibitem{PM03} M. P\"uschel and J. M. Moura. The algebraic approach to the discrete cosine and sine transforms and their fast algorithms. {\it SIAM J. Comput.}, 32:1280-1316, 2003.





\bibitem{PT05} G. Plonka and M. Tasche. Fast and Numerically stable algorithms for discrete cosine transforms {\it Linear Algebra and its Applications} 394:309-345, 2005.

\bibitem{L10} T.-T. Le. The design of optical signal transforms based on planar waveguides on a silicon on insulator platform. {\it International Journal of Engineering and Technology}, 2(3):245-251, 2010.

\bibitem{LC09} T.-T. Le and L. W. Cahill. The design of 4 $\times$ 4 multimode interfaces coupler based microring resonators on an SOI platform. {\it Journal of Telecommunications and Information Technology}, 2:58-62, 2009.


\bibitem{RKH10} K.R. Rao, D.N. Kim, and J.J. Hwang. {\it  Fast Fourier Transform: Algorithm and Applications}. 
Springer, New York, NY, 2010.

\bibitem{RY90} K. R. Rao and P. Yip. {\it  Discrete Cosine Transform:Algorithms, Advantages, Applications}, Academic Press, San Diego, CA, 1990. 

\bibitem{R13} Y. A. Reznik. Relationship Between DCT-II, DCT-VI, and DST-VII Transforms. {\it 2013 IEEE International Conference on Acoustics, Speech and Signal Processing}, 5642-5646, Vancouver, BC, 2013. 


\bibitem{RS11} N. Roma and L. Sousa. A tutorial overview on the properties of the discrete cosine transform for encoded image and video processing. {\it Signal Processing}, 91:2443-2464, 2011.


\bibitem{SLZL13} L. Shen, C. Lu, F. Zhao, and W. Liu. Discrete Fourier Transformation for Seasonal-Factor Pattern Classification and Assignment. {\it IEEE Transactions on Intelligent Transportation Systems}, 14(2):511-516, 2013.


\bibitem{SO131} S. M. Perera and V. Olshevsky. Stable, Recursive and Fast Algorithms for DST having Orthogonal Factors. 
{\it Journal of Coupled Systems Multiscale Dynamics} 1(3):358-371, 2013.

\bibitem{SO13} S. M. Perera and V. Olshevsky. Fast and Stable Algorithms for Discrete Sine Transformations having Orthogonal Factors. In: M.G. Cojocaru, I. S. Kotsireas, R. N. Makarov, R. V. N. Melnik, and H. Shodiev(eds.) {\it Interdisciplinary Topics in Applied Mathematics, Modeling and Computational Science}, 117:347-354, Springer International, Switzerland, 2015.

 \bibitem{S14} S. M. Perera. Signal Processing based on Stable radix-2 Discrete Cosine Transformation Algorithms having Orthogonal Factors. submitted to The Electronic Journal of Linear Algebra, 2016.

\bibitem{S86} G. Strang. {\it Introduction to Applied Mathematics}, Wellesley-Cambridge Press, Wellesley, MA, 1986.

\bibitem{S99} G. Strang. The Discrete Cosine Transform. {\it SIAM Review}, 41:135-147, 1999.

\bibitem{ST91} G. Steidl and M. Tasche. A polynomial approach to fast algorithms for discrete Fourier-cosine and Fourier-sine transforms, {\it Math. Comput.}, 56:281-296, 1991.

\bibitem{TZ00} M. Tasche and  H. Zeuner. Roundoff error analysis for fast trigonometry transforms. In: G. Anastassiou(ed.). {\it Handbook of Analytic-Computational Methods in Applied Mathematics}, 357-406, Chapman and Hall/CRC press, Boca Raton, FL, 2000.


\bibitem{VL92} C. Van Loan. {\it Computational Frameworks for the Fast Fourier Transform}. SIAM Publications, Philadelphia, PA, 1992.


\bibitem{VZR12} R. Veerla, Z. Zhang, and K. R. Rao. Advanced Image Coding and its Comparison with Various Still Image Codecs. {\it American Journal of Signal Processing}, 2(5):113-121, 2012.


\bibitem{VP09} Y. Voronenko and M. Püschel. Algebraic Signal Processing Theory:Cooley-Tukey Type Algorithms for Real DFTs. {\it Transactions on Signal Processing}, 57(1):1-19, 2009.

\bibitem{W84} Z. Wang. Fast algorithms for the discrete W transform and the discrete Fourier transform, {\it IEEE Trans. Acoust. Speech Signal Process}, 32:803-816, 1984.



\bibitem{WSHB00} C. Y. Wu, A. R. D. Somervell, T. G. Haskell, and T. H. Barnes. Optical Sine transformation and image transmission by using square optical waveguide. {\it Optics Communications}, 175:27-32, 2000.

\bibitem{RY80} P. Yip and K. R. Rao. A fast computational algorithm for the discrete sine transform. {\it IEEE Trans. Commun.}, 28(2):304-307, 1980. 


\bibitem{Y11} H. Yoshimura. Fingerprint templates with high recognition accuracy and high security generated by discrete fractional sine transform. In:{\it IEEE International Conference for Internet Technology and Secured Transactions}, 185-190, Abu Dhabi, 2011.



\end{thebibliography}
\end{document}